\definecolor{ColorRed}{rgb}{1,0,0}
\newtheorem{proposition}{Proposition}
\newtheorem{theorem}{Theorem}
\begin{document}
\title{Channel Estimation and Signal Detection for NLOS Ultraviolet Scattering Communication with Space Division Multiple Access}

\author{Yubo Zhang, Yuchen Pan, Chen Gong, Beiyuan Liu and Zhengyuan Xu
	\thanks{This work was supported by National Natural Science Foundation of China (Grant No. 62171428),
		Key Research Program of Frontier Sciences of CAS (Grant No. QYZDY-SSW-JSC003),
		and the Fundamental Research Funds for the Central Universities.
		
		Yubo Zhang, Yuchen Pan, Chen Gong, and Zhengyuan Xu are with Key Laboratory of Wireless-Optical Communications, Chinese Academy of Sciences, University of Science and Technology of China, Hefei, Anhui 230027, China.
		Email: \{zyb170057,ycpan\}@mail.ustc.edu.cn; \{cgong821,xuzy\}@ustc.edu.cn.
		
		Beiyuan Liu is with the National Engineering Laboratory
		for Integrated Aero-Space-Ground-Ocean Big Data Application Technology,
		Northwestern Polytechnical University, Xi’an, Shaanxi 710072, China. 
		Email: lby@nwpu.edu.cn.
		}}

\maketitle

\begin{abstract}

We design a receiver assembling several photomultipliers (PMTs) as an array to increase the field of view (FOV) of the receiver and adapt to multiuser situation over None-line-of-sight (NLOS) ultraviolet (UV) channels. Channel estimation and signal detection have been investigated according to the space division characteristics of the structure. Firstly, we adopt the balanced structure on the pilot matrix, analyze the channel estimation mean square error (MSE), and optimize the structure parameters. Then, with the estimated parameters, an analytical threshold detection rule is proposed as a preliminary work of multiuser detection. The detection rule can be optimized by analyzing the separability of two users based on the Gaussian approximation of Poisson weighted sum. To assess the effect of imperfect estimation, the sensitivity analysis of channel estimation error on two-user signal detection is performed. Moreover, we propose a successive elimination method for on-off keying (OOK) modulated multiuser symbol detection based on the previous threshold detection rule. A closed-form upper bound on the detection error rate is calculated, which turns out to be a good approximation of that of multiuser maximum-likelihood (ML) detection. The proposed successive elimination method is twenty times faster than the ML detection with negligible detection error rate degradation.

\end{abstract}

\begin{IEEEkeywords}
	Channel estimation, antenna array, space division, threshold detection, multi-user interference.
\end{IEEEkeywords}


\section{Introductions}
NLOS optical wireless scattering communication (OWC) is an effective supplement of conventional wireless communication when perfect alignment and no blockage between the transmitter and receiver cannot be guaranteed \cite{xu2008ultraviolet}. In fact, OWC is attracting more and more attention because of its potential large bandwidth and capacity \cite{elgala2011indoor, ding2009modeling}. An advantage of NLOS OWC is that it can adapt to the environment requiring radio silence or with strong electromagnetic interference. Moreover, considering the large attenuation in the atmosphere \cite{aung2019performance}, it's helpful to limit the communication in a certain area, if higher communication security is desirable.  

Due to high atmospheric attenuation of UV spectrum, a photon-counting receiver is adopted, where the received signal arrives in the form of discrete photoelectrons, yielding a Poisson distribution. In fact, UV communication and its channel characteristics have been demonstrated experimentally in \cite{xiao2011non, chen2014experimental, liao2015long, borah2021single}. As a stochastic process, the transmission path of each photon is modeled by Monte-Carlo method. The analytical approximation of Monte-Carlo method is investigated in \cite{xu2008analytical, wang2010non}, and the corresponding semi-analytical modeling has also been extensively studied \cite{sun2016closed, drost2013ultraviolet, wu2019modeling}. Works \cite{gong2016optical, zou2018characterization} have evaluated the signal detection performance of PMT based receiver. The capacity of multiple-input single-output (MISO) Poisson channel \cite{haas2003capacity} and multiple-input multiple-output (MIMO) Poisson channel \cite{chakraborty2008outage} are investigated. Work in \cite{lapidoth1998poisson} analyzes the capacity region under power limitations for multiple access Poisson channel. The achievable rates and signal detection of none-orthogonal multiple-access (NOMA) and code-division multiple-access (CDMA) multi-user Poisson channel are analyzed in \cite{wang2018signal}, while the transmission schemes of multiple access uplink channel are illustrated in \cite{chaaban2016capacity, chaaban2017capacity}. For Poisson broadcast channel, \cite{kim2016superposition, gong2018signal} shows the performance of superposition coding in the downlink channel, while the degraded Poisson broadcast channel is analyzed in \cite{sokolovsky2005attainable}. 

Channel estimation is crucial for the subsequent signal detection and performance analysis. Works in \cite{gong2015channel, gong2016optical} address the Poisson channel parameters estimation based on least-square (LS) criterion and evaluate the MSE. The channel estimation based on multiple channels correlation is proposed in \cite{liu2019correlation}, while blind and semi-blind estimation based on Poisson channel is studied in \cite{liu2022blind}. Moreover, existing works have explored the signal detection of scattering communication. The detection criterion analysis is given in \cite{chatzidiamantis2010generalized, gong2015lmmse}, while the works relevant to signal detection under relay transmission are shown in \cite{gong2018full, ardakani2017performance}. 

Up to now, works on space division ultraviolet multiuser system are in a quite limited number. An optimal beamforming design of indoor UV power-constrained multiuser communication with space division is proposed in \cite{arya2020state}. Works in \cite{ge2023shot} perform a joint design of precoder and equalizer for MIMO UV systems with pulse amplitude modulation (PAM). An interference cancellation aided pulse position modulation (PPM) scheme with spatial diversity and multiplexing over Poisson channels is proposed in \cite{zhou2012photon}. Works in \cite{abou2018spatial} consider the spatial-multiplexing solution to reach high bit rate under Poisson MIMO communication. As for signal detection, a suboptimal multiple-symbol detection method with equal gain combined statistics under Poisson MIMO channels is constructed in \cite{riediger2008multiple}, while \cite{gupta2012receiver} investigates the detection performance of zero-forcing (ZF), MMSE and maximum-likelihood sequence estimation (MLSE) receivers with spatial multiplexing over MIMO UV channels. To our very knowledge, none of the previous works systematically analyze the channel estimation and signal detection under OOK-modulated uplink-UV multiuser system. 

In fact, the idea of this paper is mainly illustrated by the issue of multiuser detection and seperation under a broad receiving FOV over UV uplink channels. PMT is selected as a detector because of its high detection sensitivity, but the FOV of a single PMT is quite narrow. Thus, a PMT array is adopted to meet the demand of broad FOV and its space division characteristics can be utilized. There are two main problems for ML detection under our system model. Firstly, the complexity of detection and performance analysis is too high with multiple users and receiving sectors. Secondly, ML detection rule is quite sensitive to the imperfect estimation of channel gain parameters. To circumvent these problems, a new detection method with lower complexity and tractable performance analysis is expected to be provided. Moreover, an accurate multiuser channel estimation scheme should be proposed to facilitate the signal detection. Under the slow fading assumption, we hope to accomplish the estimation with a pilot matrix before signal detection.





The contributions of this work can be summarized as follows. We adopt the balanced structure on the pilot matrix, analyze the channel estimation MSE, and optimize the structure parameters. We analyze the separability of two users based on the Gaussian approximation of Poisson weighted sum. An analytical threshold detection rule is proposed, and the linear weights as well as the threshold on the signal detection are optimized. The sensitivity analysis of channel estimation error on two-user signal detection is performed. Moreover, we propose a successive elimination method for OOK-modulated multiuser detection under inter-user interference (IUI) based on the previous threshold detection rule. A closed-form upper bound on the detection error probability is calculated, which turns out to be a good approximation of that of multiuser ML detection. The proposed successive elimination method is twenty times faster than the ML detection with negligible detection error degradation.

The remainder of this paper is organized as follows. In Section \uppercase\expandafter{\romannumeral2}, we introduce the space division receiver and construct an analytical signal model with LS channel estimation. In Section \uppercase\expandafter{\romannumeral3}, we propose the balanced pilot matrix design and analyze the channel estimation distortion. In Section \uppercase\expandafter{\romannumeral4}, we investigate the space division of two-user case, propose the optimal threshold detection method and analyze the sensitivity. In Section \uppercase\expandafter{\romannumeral5}, we propose a successive elimination signal detection approach with unknown IUI for the general multiuser detection, and provide a closed-form upper bound on the detection error probability. Numerical results are given in Section \uppercase\expandafter{\romannumeral6}. Finally, Section \uppercase\expandafter{\romannumeral7} concludes this work.

\def\degree{${}^{\circ}$}
\section{System Model} \label{sec.system_model}

\subsection{Model of Multiuser Space Division Receiver}
In the system, each user has one UV LED as a transmitter. Consider a UV communication network with multiple users and a receiver with multiple PMTs forming an array, where PMT detector is adopted due to its high detection sensitivity in the photon-counting regime. Fig.~\ref{node_struct} shows the three-dimensional structure of a PMT array. 

\begin{figure}[htbp]
	\centering
	\includegraphics[width=4.5in]{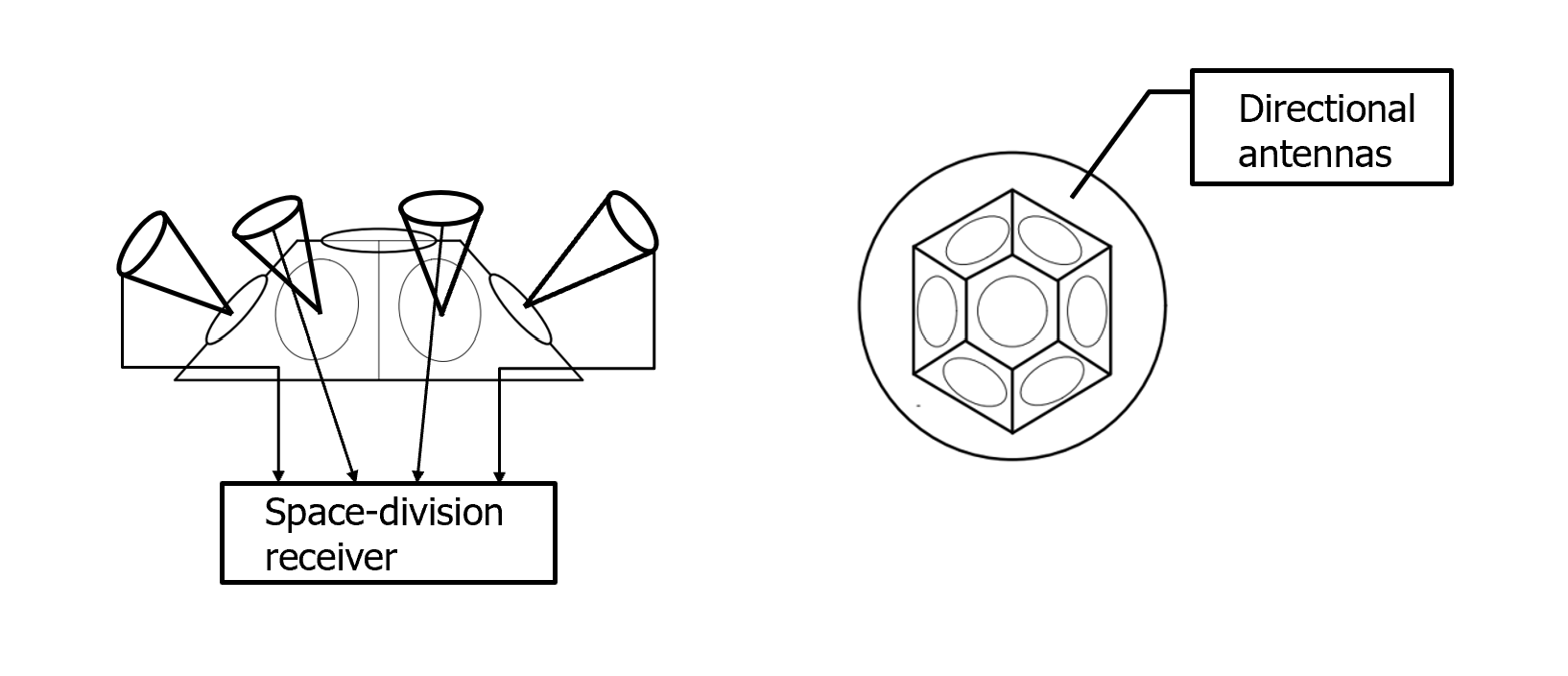}
	\caption{The 3D structure of a space division receiver.} \label{node_struct}
\end{figure}   

Assume that there are $K$ users and an array with $M$ PMTs as a space division receiver. To make the system model more comprehensible, Fig.~\ref{system_field} shows the planar graph of a space division multiuser uplink system with $K=3$ and $M=9$. 
\begin{figure}[htbp]
	\centering
	\includegraphics[width=7in]{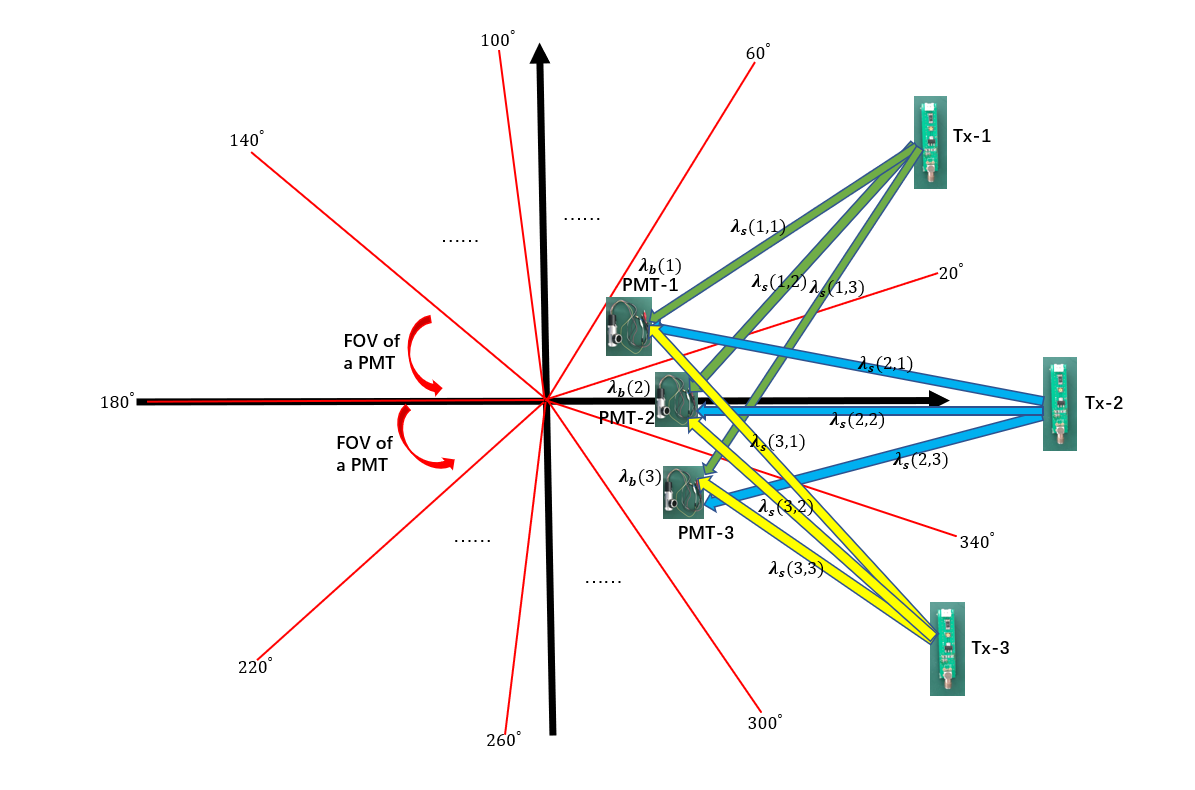}
	\caption{The system graph of a space division multiuser uplink system with $K=3$ and $M=9$ .} \label{system_field}
\end{figure} 

Due to extremely large path loss of NLOS UV link, the received signal can be characterized by discrete photoelectrons whose arrivals satisfy a Poisson random process. Denote the mean number of arrival photons of the signal from user $k$ to PMT $m$ as $\boldsymbol{\lambda_s}(k,m)$, and that of background noise of PMT $m$ as $\boldsymbol{\lambda_n}(m)$, respectively. Assume that all users are sending OOK modulated signals. Letting $N_{k,m}$ denote the number of received photoelectrons at PMT $m$ from user $k$, we have
\begin{gather} \label{basic_signal_model}
P(N_{k,m}=n|s_k=1)=\frac{(\boldsymbol{\lambda_s}(k,m)+\boldsymbol{\lambda_n}(m))^n}{n!}e^{-(\boldsymbol{\lambda_s}(k,m)+\boldsymbol{\lambda_n}(m))}, \notag \\
P(N_{k,m}=n|s_k=0)=\frac{ (\boldsymbol{\lambda_n}(m))^n}{n!}e^{-\boldsymbol{\lambda_n}(m)}, \notag \\
\boldsymbol{\lambda_s}(k,m)=\frac{\eta_{m}P_{k}T_{s}}{\xi_{k,m}h\nu}, \  \boldsymbol{\lambda_n}(m)=\boldsymbol{\Lambda_n}(m)T_s,
\end{gather}
where $\eta_{m}$ denotes the quantum efficiency of PMT $m$, $P_{k}$ denotes the emitting power of user $k$, $T_s$ denotes the symbol duration, and $\xi_{k,m}$ denotes the path loss from user $k$ to PMT $m$. 

The signals from multiple PMTs can be combined to improve the detection performance, expecially under multiuser interference. Assume that $(N_1,...,N_M)$ are the photon numbers of $M$ PMTs detected in a symbol slot. In this work, we adopt the following linear combination,
\begin{align} \label{statistics_propose}
W=\sum_{m=1}^{M} \alpha_{m}N_{m},
\end{align}
where $\{\alpha_{1},\alpha_{2},...,\alpha_{M}\}$ is the weight vector.

Based on Eq. (\ref{statistics_propose}), assuming independent photon arrivals at multiple receiving PMTs, the mean and variance of $W$ are given as follows,
\begin{align}  \label{statistics_miu_sigma}
\mathbb{E}(W)&=\sum_{m=1}^{M} \alpha_m\mathbb{E}(N_{m})=\sum_{m=1}^{M} \alpha_m\lambda_{m}, \notag \\
\mathbb{VAR}(W)&=\mathbb{E}(W^2)-\mathbb{E}^2(W)=\sum_{m=1}^{M} \alpha_m^2\lambda_{m}.
\end{align}

Note that weighted sum $W$ can be approximated via a Gaussian dustributed variable, denoted as $W^G \sim {\cal
N}(\mathbb{E}(W),\mathbb{VAR}(W))$. The accuracy of such approximation can be validated in Section VI.


\subsection{LMMSE-based Multiuser Channel Estimation}
In order to select the optimal signal detection method, channel parameters between all the users and receiving PMTs should be accurately estimated. Suppose that $K \times L$ estimation pilot matrix
\begin{align} \label{pilot_X_def}
\boldsymbol{X}=
\left(
\begin{array}{cccc}
x_{1,1} & x_{1,2} & ... & x_{1,L} \\
x_{2,1} & x_{2,2} & ... & x_{2,L} \\
... & ... & ... & ...\\
x_{K,1} & x_{K,2} & ... & x_{K,L}  
\end{array}
\right),
\end{align}
is adopted for channel estimaiton, which is known to the receiver. It means that each user will send a $L$-symbol length pilot sequence (For example, user $1$ sends $(x_{1,1},x_{1,2}, ...,x_{1,L})$). Assume that the channel parameters from all users to PMT $m$ are
\begin{align} \label{lam_sr_m}
\boldsymbol{\lambda_{sr}}(m)=[\boldsymbol{\lambda_s}(1,m),\boldsymbol{\lambda_s}(2,m),...,\boldsymbol{\lambda_s}(K,m)]^{T}.
\end{align}
All the parameters shown in Eq. (\ref{lam_sr_m}) can be estimated using the detected photon number of PMT $m$ in $L$ symbol durations, denoted as
\begin{align}
\boldsymbol{u_m}=[\boldsymbol{u_m}(1),\boldsymbol{u_m}(2),...,\boldsymbol{u_m}(L)]^{T}.
\end{align}
In fact, each element of $\boldsymbol{u_m}$ satisfies Poisson distribution. Noting that the Poisson arrival intensity of each symbol duration is influenced by $K$ users simultaneously, we have that
\begin{align} \label{pilot_data_distri}
&P(\boldsymbol{u_m}(l)=n)=\frac{(\boldsymbol{\lambda_m}(l))^n}{n!}e^{-(\boldsymbol{\lambda_m}(l))}, \notag \\
&\boldsymbol{\lambda_m}(l)=\sum_{k=1}^{K}\boldsymbol{\lambda_s}(k,m)x_{k,l}+\boldsymbol{\lambda_n}(m).
\end{align}

Similar to \cite{gong2015channel}, an unbiased channel estimator, denoted as $\boldsymbol{\hat{\lambda}_{sr}}(m)$, is given as follows,
\begin{align} \label{unbiased_H}
\boldsymbol{\hat{\lambda}_{sr}}(m)=(\boldsymbol{X}\boldsymbol{X}^T)^{-1}\boldsymbol{X}(\boldsymbol{u_m}-\boldsymbol{\lambda_n}(m)\boldsymbol{1}),
\end{align}
and the corresponding estimation MSE is given by,
\begin{align} \label{unbiased_H_MSE}
\mathbb{E} \Vert \boldsymbol{\lambda_{sr}}(m)-\boldsymbol{\hat{\lambda}_{sr}}(m) \Vert^2 
=Tr((\boldsymbol{X}\boldsymbol{X}^T)^{-1}\boldsymbol{X}diag(\boldsymbol{X}^T\boldsymbol{\lambda_{sr}}(m))\boldsymbol{X}^T(\boldsymbol{X}\boldsymbol{X}^T)^{-1})+\boldsymbol{\lambda_n}(m)Tr((\boldsymbol{X}\boldsymbol{X}^T)^{-1}).
\end{align}



\graphicspath{{figures/FigChanEsti/}}
\def\degree{${}^{\circ}$}

\section{Channel Estimation Design for Multiuser Space Division System} \label{sec.chan_esti}


\subsection{Pilot Matrix Design Based on Balanced Structure} 
In the system, all the users' signals are OOK modulated symbols. Let $\boldsymbol{X}=(\boldsymbol{x}_1,\boldsymbol{x}_2,...,\boldsymbol{x}_L)$, where each $\boldsymbol{x}_l$ is a $K \times 1$ colomn vector corresponding to the pilot symbols in slot $l$. 



Balanced structure is adopted for the design of pilot matrix $\boldsymbol{X}$, where each user sends with the same frequency in a block. In our design, it's expected that any $R$ users selected from the whole $K$ users simultaneously appear with the same frequency for any $k=1,2,...,K$. To explain this property more clearly, given matrix $\boldsymbol{X}$ as shown in Eq. (\ref{pilot_X_def}) and an index group $\Theta=\{\theta_1,...,\theta_R\}$, define   
\begin{align} \label{zeta_def}
\zeta_{\boldsymbol{X}}(\Theta) \triangleq \sum_{l=1}^{L} (\prod_{r=1}^{R} x_{\Theta(r),l}).
\end{align}
Based on Eq. (\ref{zeta_def}), for $\forall \Theta \subseteq \{1,2,...,K\}$ with $\Vert \Theta \Vert_0=R$ for any $1 \leq R \leq K$, it's expected that $\zeta_{\boldsymbol{X}}(\Theta)$ is only a function of $R$. Matrix $\boldsymbol{X}$ will be considered to be a balanced structure if such a property is satisfied.  

%

Considering $R$, define matrix
\begin{align} \label{mathx_w_def}
\boldsymbol{X}^{(R)}=(\boldsymbol{x}_1,\boldsymbol{x}_2,...,\boldsymbol{x}_{C^R_K}),
\end{align}
where $\boldsymbol{x}_i$ is a $K \times 1$ vector and $\Vert \boldsymbol{x}_i \Vert_1=R$ for $1 \leq i \leq C^R_K$. $C^R_K$ indicates the possible modes given weight $R$. An example is given as follows for $K=4$ and $R=2$, 
\begin{align} \label{x_weight}
\boldsymbol{X}^{(2)}=
\left(
\begin{array}{cccccc}
1 & 1 & 1 & 0 & 0 & 0 \\
1 & 0 & 0 & 1 & 1 & 0 \\
0 & 1 & 0 & 1 & 0 & 1 \\
0 & 0 & 1 & 0 & 1 & 1 
\end{array}
\right).
\end{align}

Based on the above definition, define group $\boldsymbol{\Xi}$ as the set of matrix generated via column concatenation of matrices in $\{\boldsymbol{X}^{(1)},\boldsymbol{X}^{(2)},...,\boldsymbol{X}^{(K)}\}$ with totally $2^K-1$ elements. Specifically, we let
\begin{align} \label{Xi_construct}
&\boldsymbol{\Xi}=\boldsymbol{\Xi}^1 \cup \boldsymbol{\Xi}^2 \cup ... \cup \boldsymbol{\Xi}^{K}, \notag \\
\boldsymbol{\Xi}^1&=\{\boldsymbol{X}^{(1)},\boldsymbol{X}^{(2)},...,\boldsymbol{X}^{(K)}\}, \notag \\
\boldsymbol{\Xi}^2=\{[\boldsymbol{X}^{(1)}, \boldsymbol{X}&^{(2)}],[\boldsymbol{X}^{(1)}, \boldsymbol{X}^{(3)}],...,[\boldsymbol{X}^{(K-1)}, \boldsymbol{X}^{(K)}]\},\notag \\
&...... \notag \\
\boldsymbol{\Xi}^{K-1}=\{[\boldsymbol{X}^{(2)},\boldsymbol{X}^{(3)},...,\boldsymbol{X}^{(K)}]&,[\boldsymbol{X}^{(1)},\boldsymbol{X}^{(3)},...,\boldsymbol{X}^{(K)}],...,[\boldsymbol{X}^{(1)},\boldsymbol{X}^{(2)},...,\boldsymbol{X}^{(K-1)}]\}, \notag \\
\boldsymbol{\Xi}^{K}&=\{[\boldsymbol{X}^{(1)}, \boldsymbol{X}^{(2)}, ...  \boldsymbol{X}^{(K)}]\}, \\
|\boldsymbol{\Xi}|=&\sum_{i=1}^{K}|\boldsymbol{\Xi}^{i}|=\sum_{i=1}^{K} C^i_{K}=2^{K}-1.
\end{align}



We have the following results on all matrices in $\boldsymbol{\Xi}$.
\begin{theorem} \label{single_fun}
	For $\forall \boldsymbol{X} \in \boldsymbol{\Xi}$ and $\forall \Theta \subseteq \{1,2,...,K\}$, while $\Vert \Theta \Vert_0=R$ for any $1 \leq R \leq K$, $\zeta_{\boldsymbol{X}}(\Theta)$ is only a function of $R$.
\end{theorem}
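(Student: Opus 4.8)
The plan is to exploit two elementary facts: that the quantity $\zeta_{\boldsymbol{X}}(\Theta)$ is \emph{additive} under column concatenation of pilot matrices, and that each building block $\boldsymbol{X}^{(i)}$ individually satisfies the desired balanced property by a direct counting argument. Since every $\boldsymbol{X}\in\boldsymbol{\Xi}$ is, by the construction in (\ref{Xi_construct}), a column concatenation of a distinct subset of $\{\boldsymbol{X}^{(1)},\dots,\boldsymbol{X}^{(K)}\}$, the theorem will follow by combining these two facts.

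First I would observe that, because all pilot entries $x_{k,l}$ are OOK symbols taking values in $\{0,1\}$, the product $\prod_{r=1}^{R}x_{\Theta(r),l}$ equals $1$ precisely when every user indexed in $\Theta$ is active in slot $l$, and $0$ otherwise. Hence $\zeta_{\boldsymbol{X}}(\Theta)$ simply counts the number of columns of $\boldsymbol{X}$ in which all users of $\Theta$ transmit simultaneously. From this reading, additivity is immediate: if $\boldsymbol{X}=[\boldsymbol{A},\boldsymbol{B}]$ is formed by concatenating the columns of $\boldsymbol{A}$ and those of $\boldsymbol{B}$, then the summation over slots in (\ref{zeta_def}) splits into the slots belonging to $\boldsymbol{A}$ and those belonging to $\boldsymbol{B}$, so $\zeta_{\boldsymbol{X}}(\Theta)=\zeta_{\boldsymbol{A}}(\Theta)+\zeta_{\boldsymbol{B}}(\Theta)$; by a trivial induction the same holds for any finite concatenation.

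Next I would compute $\zeta_{\boldsymbol{X}^{(i)}}(\Theta)$ for a fixed $\Theta$ with $\Vert\Theta\Vert_0=R$. By definition the columns of $\boldsymbol{X}^{(i)}$ range over all $C^i_K$ distinct binary vectors of Hamming weight $i$ (recall $\Vert\boldsymbol{x}\Vert_1$ equals the Hamming weight for $0/1$ vectors). A given column has a $1$ in every position of $\Theta$ iff its support is an $i$-subset of $\{1,\dots,K\}$ containing the fixed $R$-set $\Theta$; the number of such supports is $C^{i-R}_{K-R}$ when $i\ge R$ and $0$ when $i<R$. Crucially, this count depends only on $i$ and $R$, not on which particular users constitute $\Theta$, so each $\boldsymbol{X}^{(i)}$ is a balanced structure with $\zeta_{\boldsymbol{X}^{(i)}}(\Theta)=C^{i-R}_{K-R}$ (interpreting the binomial coefficient as $0$ for $i<R$). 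Writing an arbitrary $\boldsymbol{X}\in\boldsymbol{\Xi}$ as $[\boldsymbol{X}^{(i_1)},\dots,\boldsymbol{X}^{(i_m)}]$ for some $1\le i_1<\cdots<i_m\le K$ and invoking additivity then gives $\zeta_{\boldsymbol{X}}(\Theta)=\sum_{j=1}^{m}C^{i_j-R}_{K-R}$, which is a function of $R$ alone, completing the argument.

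There is no genuinely hard step here; the proof is entirely combinatorial. The only point requiring care is the counting used for $\zeta_{\boldsymbol{X}^{(i)}}(\Theta)$ — one must use that the columns of $\boldsymbol{X}^{(i)}$ are \emph{exhaustive} over all weight-$i$ binary vectors (not merely some of them) and that the entries are binary, so the product in (\ref{zeta_def}) collapses to an ``all-ones-on-$\Theta$'' indicator. Once this is in place, independence from the identity of $\Theta$ is manifest from the symmetry of the collection of weight-$i$ vectors under permutations of the coordinates.
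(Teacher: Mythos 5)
Your proof is correct and follows essentially the same route as the paper's: decompose $\boldsymbol{X}\in\boldsymbol{\Xi}$ into its constituent blocks $\boldsymbol{X}^{(i)}$, use additivity of $\zeta$ over the concatenated columns, and count, for each block, the weight-$i$ columns whose support contains the fixed $R$-set $\Theta$, obtaining $\binom{K-R}{i-R}$ (zero for $i<R$), exactly as in Eq.~(\ref{k_feasible_verify}). Your write-up merely makes explicit the indicator interpretation and the exhaustiveness of the weight-$i$ columns, which the paper leaves implicit.
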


\begin{proof}
	For any index group $\boldsymbol{\beta}=\{\beta_1,...,\beta_S\} \subseteq \{1,2,...,K\}$, let
	\begin{align} \label{x_xi_def}
	\boldsymbol{X}^{\boldsymbol{\beta}}=(\boldsymbol{X}^{(\beta_1)};\boldsymbol{X}^{(\beta_2)};...;\boldsymbol{X}^{(\beta_S)}).
	\end{align}
	Each $\boldsymbol{X}^{(\beta_s)}$ is defined by Eq. (\ref{mathx_w_def}). Obviously, $\boldsymbol{X}^{\boldsymbol{\beta}} \in \boldsymbol{\Xi}$ and the colomn number of $\boldsymbol{X}^{\boldsymbol{\beta}}$ is $M(\boldsymbol{\beta})=\sum_{s=1}^{S} C^{\beta_s}_K$. Define
	\begin{align} 
	\binom{n}{k}=
	\begin{cases}
	&C^{k}_{n}, 0 \leq k \leq n, \notag \\
	&0, otherwise.
	\end{cases}
	\end{align}
	Considering any index group $\Theta \subseteq \{1,...,K\}$ with $\Vert \Theta \Vert_0=R$, according to Eq. (\ref{zeta_def}), we have
	\begin{align} \label{k_feasible_verify}
	\zeta_{\boldsymbol{X}^{\boldsymbol{\beta}}}(\Theta)=\sum_{l=1}^{M(\boldsymbol{\beta})} (\prod_{r=1}^{R} x_{\Theta(r),l}) 
	=\sum_{s=1}^{S} \ \sum_{h=1}^{C^{\boldsymbol{\beta}_i}_K} \ |(\prod_{r=1}^{R} x^{\beta_s}_{\Theta(r),h})| 
	=\sum_{s=1}^{S} \ \binom{K-R}{\beta_s-R}.
	\end{align}
	It can be seen that $\zeta_{\boldsymbol{X}^{\boldsymbol{\beta}}}(\Theta)$ is only a function of $R$ for $\forall R \in \{1,...,K\}$.
\end{proof}


\subsection{Objective Function Calculation From MSE Expression} 
Considering the estimation of $\boldsymbol{\lambda_{sr}}(m)=[\boldsymbol{\lambda_s}(1,m),\boldsymbol{\lambda_s}(2,m),...,\boldsymbol{\lambda_s}(K,m)]$ using the detected photon numbers of PMT $m$. Based on Eq. (\ref{unbiased_H_MSE}), assuming known background noise intensity $\boldsymbol{\lambda_n}(m)$. Denote $\Vert \boldsymbol{\lambda_{sr}}(m) \Vert_1$ as the element-wise first-order norm of $\boldsymbol{\lambda_{sr}}(m)$ as the channel gain parameters from $K$ users to PMT $m$. Considering pilot matrix $\boldsymbol{X}$ with balanced structure, we have the following preliminary results refer to Theorem \ref{single_fun}.
\begin{proposition} \label{abc_def}
	Consider pilot matrix
	\begin{align} \label{pilot_shown_2}
	\boldsymbol{X}=
	\left(
	\begin{array}{cccc}
	x_{1,1} & x_{1,2} & ... & x_{1,L} \\
	x_{2,1} & x_{2,2} & ... & x_{2,L} \\
	... & ... & ... & ...\\
	x_{K,1} & x_{K,2} & ... & x_{K,L}  
	\end{array}
	\right),
	\end{align}
	for $\forall i_1,i_2,i_3 \in \{1,2,...,K\}$ and $i_1 \neq i_2 \neq i_3$. It turns out that,
	\begin{align} \label{c_def}
	\frac{\sum_{l=1}^{L} x_{i,l}}{L} \triangleq a, \
	\frac{\sum_{l=1}^{L} x_{i_1,l} x_{i_2,l}}{L} \triangleq b, \
	\frac{\sum_{l=1}^{L} x_{i_1,l} x_{i_2,l} x_{i_3,l}}{L} \triangleq c,
	\end{align}
	are not related to specific values of $i_1$, $i_2$ and $i_3$.
\end{proposition}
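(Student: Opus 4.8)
The plan is to read Proposition~\ref{abc_def} as the specialization of Theorem~\ref{single_fun} to the cases $R=1,2,3$. First I would recall that, since the pilot matrix carries OOK symbols, every entry $x_{i,l}\in\{0,1\}$, so for any index set $\Theta=\{\theta_1,\dots,\theta_R\}$ the quantity $\zeta_{\boldsymbol{X}}(\Theta)=\sum_{l=1}^{L}\prod_{r=1}^{R}x_{\theta_r,l}$ in Eq.~(\ref{zeta_def}) is exactly the number of columns in which all rows indexed by $\Theta$ equal $1$; in particular the algebraic products in Eq.~(\ref{c_def}) coincide with these indicator sums, namely $\sum_{l}x_{i,l}=\zeta_{\boldsymbol{X}}(\{i\})$, $\sum_{l}x_{i_1,l}x_{i_2,l}=\zeta_{\boldsymbol{X}}(\{i_1,i_2\})$, and $\sum_{l}x_{i_1,l}x_{i_2,l}x_{i_3,l}=\zeta_{\boldsymbol{X}}(\{i_1,i_2,i_3\})$, the last requiring $K\geq 3$ and $i_1,i_2,i_3$ pairwise distinct.

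Second, I would invoke the balanced-structure property, which by Theorem~\ref{single_fun} holds for every $\boldsymbol{X}\in\boldsymbol{\Xi}$: $\zeta_{\boldsymbol{X}}(\Theta)$ depends only on $\Vert\Theta\Vert_0$. Applying this with $\Vert\Theta\Vert_0=1$ shows $\sum_{l}x_{i,l}$ takes a common value over all $i$, and dividing by $L$ defines $a$; with $\Vert\Theta\Vert_0=2$ the double sum takes a common value over all unordered pairs $\{i_1,i_2\}$, giving $b$; and $\Vert\Theta\Vert_0=3$ gives $c$. If one wants explicit formulas, I would further use Eq.~(\ref{k_feasible_verify}): writing $\boldsymbol{X}=\boldsymbol{X}^{\boldsymbol{\beta}}$ for some $\boldsymbol{\beta}=\{\beta_1,\dots,\beta_S\}$, one has $L=\sum_{s}\binom{K}{\beta_s}$ together with $aL=\sum_{s}\binom{K-1}{\beta_s-1}$, $bL=\sum_{s}\binom{K-2}{\beta_s-2}$, and $cL=\sum_{s}\binom{K-3}{\beta_s-3}$.

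I do not expect a genuine obstacle here: the statement is essentially an immediate corollary of Theorem~\ref{single_fun}, with Theorem~\ref{single_fun} also guaranteeing that matrices with the required balanced structure actually exist (the elements of $\boldsymbol{\Xi}$). The only two points that merit a word of care are (i) the identification of the purely combinatorial $\zeta_{\boldsymbol{X}}$ with the algebraic products $x_{i_1,l}x_{i_2,l}x_{i_3,l}$, which is valid precisely because the entries are binary so a product equals the indicator of simultaneous support; and (ii) the reading of ``$i_1\neq i_2\neq i_3$'' in the hypothesis as pairwise distinctness, so that $\{i_1,i_2,i_3\}$ is a genuine $3$-subset of $\{1,\dots,K\}$ and $c$ is well defined (which also tacitly requires $K\geq 3$). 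Everything else is direct substitution.
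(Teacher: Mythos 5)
Your proposal is correct and follows essentially the same route as the paper: the paper presents Proposition~\ref{abc_def} as a direct consequence of Theorem~\ref{single_fun} (i.e., $\zeta_{\boldsymbol{X}}(\Theta)$ depending only on $\Vert\Theta\Vert_0$, specialized to $R=1,2,3$ with binary entries turning products into indicator sums), and obtains the explicit values of $a,b,c$ from Eq.~(\ref{k_feasible_verify}) exactly as you do in Eq.~(\ref{abc_calcu}). Your remarks on pairwise distinctness and $K\geq 3$ are sensible housekeeping but not a departure from the paper's argument.
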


Moreover, based on Eq. (\ref{k_feasible_verify}), if $ \boldsymbol{X}=\boldsymbol{X}^{\boldsymbol{\hat{\beta}}}$ for $\boldsymbol{\hat{\beta}}=\{\hat{\beta}_1,...,\hat{\beta}_S\}$, we have
\begin{align} \label{abc_calcu}
a=\frac{\sum_{s=1}^{S} \binom{K-1}{\hat{\beta}_s-1}}{\sum_{s=1}^{S} \binom{K}{\hat{\beta}_s}}, \
b=\frac{\sum_{s=1}^{S} \binom{K-2}{\hat{\beta}_s-2}}{\sum_{s=1}^{S} \binom{K}{\hat{\beta}_s}}, \
c=\frac{\sum_{s=1}^{S} \binom{K-3}{\hat{\beta}_s-3}}{\sum_{s=1}^{S} \binom{K}{\hat{\beta}_s}}.
\end{align} 

\begin{theorem} \label{f_mse_deep_calc}
	The estimation MSE of $\boldsymbol{\lambda_{sr}}(m)$ in Eq. (\ref{unbiased_H_MSE}), denoted as $\mathcal{F}_m(\boldsymbol{X})$, is given by 
	\begin{align} \label{f_mse_abc}
	&\mathcal{F}_m(\boldsymbol{X})=\frac{\boldsymbol{\lambda_n}(m)K(1-\frac{b}{a+(K-1)b})}{L(a-b)} \notag \\
	&+\frac{\Vert \boldsymbol{\lambda_{sr}}(m) \Vert_1}{L(a-b)^2}\bigg( a+(K-1)b+\frac{(2-K)b^2-2ab}{(a+(K-1)b)^2} (a+3b(K-1)+c(K-1)(K-2))\bigg),
	\end{align}
	where $a$, $b$ and $c$ are given by Eq. (\ref{abc_calcu}).
\end{theorem}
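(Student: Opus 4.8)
\noindent\emph{Proof strategy.} Recall from Eq.~(\ref{unbiased_H_MSE}) that the unbiased LS estimate of $\boldsymbol{\lambda_{sr}}(m)$ from the photon counts at PMT $m$ has an MSE of the form $\mathcal{F}_m(\boldsymbol{X})=\mathrm{tr}\big(\boldsymbol{A}\boldsymbol{X}\boldsymbol{D}\boldsymbol{X}^T\boldsymbol{A}\big)$, where $\boldsymbol{A}=(\boldsymbol{X}\boldsymbol{X}^T)^{-1}$ and $\boldsymbol{D}$ is diagonal with $l$-th entry the Poisson mean $\big(\boldsymbol{X}^T\boldsymbol{\lambda_{sr}}(m)\big)_l+\boldsymbol{\lambda_n}(m)$. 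Since $\boldsymbol{X}$ is a $0/1$ matrix, Proposition~\ref{abc_def} gives $\boldsymbol{X}\boldsymbol{X}^T=L(a-b)\boldsymbol{I}+Lb\,\boldsymbol{1}\boldsymbol{1}^T$, so by the Sherman--Morrison formula $\boldsymbol{A}=\frac{1}{L(a-b)}\big(\boldsymbol{I}-\frac{b}{a+(K-1)b}\boldsymbol{1}\boldsymbol{1}^T\big)$. Two structural facts drive the whole computation: $\boldsymbol{1}$ is an eigenvector of $\boldsymbol{A}$ with $\boldsymbol{A}\boldsymbol{1}=\frac{1}{L(a+(K-1)b)}\boldsymbol{1}$, and $\sum_i(\boldsymbol{A})_{ij}^2$ does not depend on $j$.

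Write $\boldsymbol{v}=\boldsymbol{\lambda_{sr}}(m)$ and $\Lambda=\Vert\boldsymbol{\lambda_{sr}}(m)\Vert_1=\boldsymbol{1}^T\boldsymbol{v}$, and split $\boldsymbol{D}=\boldsymbol{\lambda_n}(m)\boldsymbol{I}+\mathrm{diag}(\boldsymbol{X}^T\boldsymbol{v})$. The background part contributes $\boldsymbol{\lambda_n}(m)\,\mathrm{tr}(\boldsymbol{A}\boldsymbol{X}\boldsymbol{X}^T\boldsymbol{A})=\boldsymbol{\lambda_n}(m)\,\mathrm{tr}(\boldsymbol{A})$, and $\mathrm{tr}(\boldsymbol{A})=\frac{K}{L(a-b)}\big(1-\frac{b}{a+(K-1)b}\big)$ is exactly the first term of Eq.~(\ref{f_mse_abc}). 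For the signal part, the key is the entrywise form of $\boldsymbol{X}\,\mathrm{diag}(\boldsymbol{X}^T\boldsymbol{v})\,\boldsymbol{X}^T$: entry $(i,j)$ equals $\sum_k\boldsymbol{\lambda_s}(k,m)\sum_l x_{i,l}x_{j,l}x_{k,l}$, and by the balanced structure the inner triple sum takes only the values $La$, $Lb$, $Lc$ according to whether $k$ coincides with $i$ and/or $j$. Collecting terms yields
\begin{align*}
\boldsymbol{X}\,\mathrm{diag}(\boldsymbol{X}^T\boldsymbol{v})\,\boldsymbol{X}^T=L\Big[(b-c)(\boldsymbol{v}\boldsymbol{1}^T+\boldsymbol{1}\boldsymbol{v}^T)+c\Lambda\,\boldsymbol{1}\boldsymbol{1}^T+(a-3b+2c)\,\mathrm{diag}(\boldsymbol{v})+(b-c)\Lambda\,\boldsymbol{I}\Big].
\end{align*}

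Sandwiching this between two copies of $\boldsymbol{A}$ and taking the trace, the rank-one pieces $\boldsymbol{v}\boldsymbol{1}^T+\boldsymbol{1}\boldsymbol{v}^T$ and $\boldsymbol{1}\boldsymbol{1}^T$ are immediate from $\boldsymbol{A}\boldsymbol{1}=\frac{1}{L(a+(K-1)b)}\boldsymbol{1}$ and $\boldsymbol{1}^T\boldsymbol{v}=\Lambda$; the $\mathrm{diag}(\boldsymbol{v})$ piece gives $\sum_{i,j}(\boldsymbol{A})_{ij}^2\,v_j=\Lambda\sum_i(\boldsymbol{A})_{i1}^2$ by the $j$-independence, and the $\boldsymbol{I}$ piece gives $\mathrm{tr}(\boldsymbol{A}^2)$; with $\gamma=\frac{b}{a+(K-1)b}$ both reduce through $\sum_i(\boldsymbol{A})_{ij}^2=\frac{1-2\gamma+K\gamma^2}{L^2(a-b)^2}$. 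Adding the four contributions, the signal part of $\mathcal{F}_m(\boldsymbol{X})$ comes out as $\frac{\Lambda}{L(a-b)^2}(1-2\gamma+K\gamma^2)\big(a+(K-3)b+(2-K)c\big)+\frac{\Lambda}{L(a+(K-1)b)^2}\big(2b+(K-2)c\big)$.

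The main obstacle is the final algebraic reduction of this two-piece expression to the single bracket in Eq.~(\ref{f_mse_abc}). With $p=a+(K-1)b$ it rests on three identities that are not visible by inspection: $1-2\gamma+K\gamma^2=1+\frac{(2-K)b^2-2ab}{p^2}$, the factorization $p^2-(a-b)^2=Kb\,(2a+(K-2)b)$, and $\big(a+(K-3)b+(2-K)c\big)-\big(a+3b(K-1)+c(K-1)(K-2)\big)=-K\big(2b+(K-2)c\big)$. Together these absorb the residual $\frac{1}{Lp^2}$ term and the additive constant of $1-2\gamma+K\gamma^2$ into the correction $\frac{(2-K)b^2-2ab}{(a+(K-1)b)^2}\big(a+3b(K-1)+c(K-1)(K-2)\big)$ and turn the prefactor $a+(K-3)b+(2-K)c$ into $a+(K-1)b$, which is precisely Eq.~(\ref{f_mse_abc}); substituting the closed forms of $a,b,c$ from Eq.~(\ref{abc_calcu}) is then optional.
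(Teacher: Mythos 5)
Your derivation is correct and does reach Eq.~(\ref{f_mse_abc}), but the core computation follows a genuinely different route from the paper's Appendix~A. The common ground: your starting form $\mathrm{tr}(\boldsymbol{A}\boldsymbol{X}\boldsymbol{D}\boldsymbol{X}^T\boldsymbol{A})$ with the noise folded into $\boldsymbol{D}$ is equivalent to the two-trace expression in Eq.~(\ref{unbiased_H_MSE_app}) because $\boldsymbol{A}\boldsymbol{X}\boldsymbol{X}^T\boldsymbol{A}=\boldsymbol{A}$, and both arguments use the same Sherman--Morrison inverse (Eq.~(\ref{xxt_inv_expression})) and the same noise term $\boldsymbol{\lambda_n}(m)\mathrm{tr}((\boldsymbol{X}\boldsymbol{X}^T)^{-1})$. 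The divergence is in the signal term. The paper cyclically permutes the trace (Eq.~(\ref{trace_order_change})) so that $(\boldsymbol{X}\boldsymbol{X}^T)^{-2}$ appears, writes the trace as $\frac{1}{L^2(a-b)^2}\big(\sum_l \boldsymbol{x}_l^T\boldsymbol{Q}\boldsymbol{x}_l\boldsymbol{x}_l^T\big)\boldsymbol{\lambda_{sr}}(m)$, and uses balance to show that this row vector is a constant $q$ times the all-ones row, with $q$ obtained by summing $q(k_0)$ over $k_0$ and the norm identities of Eq.~(\ref{norm_1_property}); the stated form with $\psi=\frac{(2-K)b^2-2ab}{(a+(K-1)b)^2}$ then appears directly, with no further algebra. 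You instead expand $\boldsymbol{X}\,\mathrm{diag}(\boldsymbol{X}^T\boldsymbol{v})\,\boldsymbol{X}^T$ entrywise through the triple correlations $La,Lb,Lc$ and decompose it into rank-one, diagonal and identity pieces; I checked both the diagonal entries ($L[(a-b)v_i+b\Lambda]$) and off-diagonal entries ($L[(b-c)(v_i+v_j)+c\Lambda]$), and your decomposition reproduces them. Your trace evaluation via $\boldsymbol{A}\boldsymbol{1}=\frac{1}{L(a+(K-1)b)}\boldsymbol{1}$ and $(\boldsymbol{A}^2)_{jj}=\frac{1-2\gamma+K\gamma^2}{L^2(a-b)^2}$ with $\gamma=\frac{b}{a+(K-1)b}$ gives exactly the two-piece intermediate expression you state, and the three closing identities you invoke are all true (with $p=a+(K-1)b$: $1-2\gamma+K\gamma^2=1+\psi$; $p^2-(a-b)^2=Kb(2a+(K-2)b)$; and $\big(a+3b(K-1)+c(K-1)(K-2)\big)-\big(a+(K-3)b+(2-K)c\big)=K\big(2b+(K-2)c\big)$), and they do collapse your expression to Eq.~(\ref{f_mse_abc}). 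The trade-off between the two proofs: your entrywise route makes transparent where $a$, $b$, $c$ enter (as row triple correlations of the pilot matrix) but pays for it with a nontrivial algebraic finish, while the paper's column-wise symmetry device ($q(1)=\dots=q(K)$) delivers the final closed form in one step at the price of a less intuitive averaging trick.
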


\begin{proof}
	See Appendix A.
\end{proof} 

The optimal pilot matrix design for the estimation of $\boldsymbol{\lambda_{sr}}(m)=[\boldsymbol{\lambda_s}(1,m),\boldsymbol{\lambda_s}(2,m),...,\boldsymbol{\lambda_s}(K,m)]$ is formulated by solving the following problem,
\begin{gather} \label{X_opt_problem}
\min\limits_{\boldsymbol{X}} \ \mathcal{F}_{m}(\boldsymbol{X}) \notag \\
s.t. \ \forall \Theta \subseteq \{1,2,...,K\}, \ \Vert \Theta \Vert_0=R, \ \zeta_{\boldsymbol{X}}(\Theta) \ is \ only \ a \ function \ of \ R.
\end{gather}

\graphicspath{{figures/FigDivFunc/}}
\def\degree{${}^{\circ}$}
\section{Analysis and Optimization of Threshold Detection: Two-User Case} \label{sec.div_function}
The multiuser signal detection will be performed after the channel estimation. Before investigating on the general multiuser situation, we consider binary decision under two-user situation as a crucial basis. Specifically, considering the problem of two-user separation via a PMT array, we propose signal detection with a linear weighted sum. Threshold detection rule is adopted. An optimization problem on the weight vector is proposed, and a tractable solution is obtained, based on which the detection threshold is derived. Furthermore, the closed-form error analysis on the proposed method will be given. Finally, the sensitivity analysis with respect to imperfect channel estimation is conducted.   

\subsection{Problem Formulation on Detection Error Minimization} \label{obj_derive}
Consider two users, user $A$ and user $B$, which are sending OOK modulated signal and each time only one user is allowed to send. We aim to design a detection rule to help the receiver decide which user is more likely to send signals. In this situation, the detection error rate equals the probability that the receiver chooses a wrong user. Assume that the mean photon arrival numbers of desired signal components from user $A$ to the $M$ PMTs are $(\boldsymbol{\lambda_a^s}(1),...,\boldsymbol{\lambda_a^s}(M))$, the mean photon arrival numbers of desired signal components from user $B$ to the $M$ PMTs are $(\boldsymbol{\lambda_b^s}(1),...,\boldsymbol{\lambda_b^s}(M))$, and the mean photon arrival numbers of background noise of $M$ PMTs are $\boldsymbol{\lambda_n}=(\boldsymbol{\lambda_n}(1),...,\boldsymbol{\lambda_n}(M))$. Denote that,
\begin{align}
\boldsymbol{\lambda_a}&(m)=\boldsymbol{\lambda_a^s}(m)+\boldsymbol{\lambda_n}(m), \ 
\boldsymbol{\lambda_b}(m)=\boldsymbol{\lambda_b^s}(m)+\boldsymbol{\lambda_n}(m), \ m=1,...,M. \\
&\boldsymbol{\lambda_{st}}(a)=(\boldsymbol{\lambda_a}(1),...,\boldsymbol{\lambda_a}(M)), \ \boldsymbol{\lambda_{st}}(b)=(\boldsymbol{\lambda_b}(1),...,\boldsymbol{\lambda_b}(M))
\end{align}
Assume that the pulse numbers of all the $M$ PMTs corresponding to user $A$ within a symbol duration are $(N_{a,1},...,N_{a,M})$, and those for user $B$ are $(N_{b,1},...,N_{b,M})$. Letting $\{\alpha_m\}^M_{m=1}$ denote the weights, the weighted sums are given by
\begin{align} \label{statis_display}
W_A=\sum_{m=1}^{M} \alpha_mN_{a,m}, \ W_B=\sum_{m=1}^{M} \alpha_mN_{b,m}.
\end{align}

Gaussian approximations of $W_A$ and $W_B$ are introduced, denoted as $W_A \sim \mathcal{N}(\mu_a,\sigma_a^2)$ and $W_B \sim \mathcal{N}(\mu_b,\sigma_b^2)$, respectively. According to Eq. (\ref{statistics_miu_sigma}), we have
\begin{align}  \label{SF_miu_sigma}
&\mu_a=\sum_{m=1}^{M} \alpha_m\boldsymbol{\lambda_a}(m), \ \mu_b=\sum_{m=1}^{M} \alpha_m\boldsymbol{\lambda_b}(m),\notag \\
&\sigma_a^2=\sum_{m=1}^{M} \alpha_m^2\boldsymbol{\lambda_a}(m), \ \sigma_b^2=\sum_{m=1}^{M} \alpha_m^2\boldsymbol{\lambda_b}(m).
\end{align}

Assume prior probability $p(a=1)=p(b=1)=0.5$. Since weight vector $\boldsymbol{\alpha}$ can be adjusted, without loss of generality, assume that $\mu_a \geq \mu_b$. Then, the detection error pobability with detection threshold $th$, denoted as $p_e$, is given by,
\begin{align} \label{detection_error_expression}
p_e=\frac{1}{2}\bigg(\int_{-\infty}^{th} \frac{1}{\sqrt{2\pi}\sigma_a} e^{-\frac{(x-\mu_a)^2}{2\sigma_a^2}}dx\bigg)+\frac{1}{2}\bigg(\int_{th}^{+\infty} \frac{1}{\sqrt{2\pi}\sigma_b} e^{-\frac{(x-\mu_b)^2}{2\sigma_b^2}}dx\bigg).
\end{align}
Based on Eq. (\ref{detection_error_expression}), the optimal detection threshold, denoted as $\hat{th}$, is given by
\begin{align} \label{threshold_expression}
\hat{th}=\frac{\sigma_a\sigma_b\sqrt{(\mu_a-\mu_b)^2+2(\sigma_a^2-\sigma_b^2)\ln{\frac{\sigma_a}{\sigma_b}}}+\sigma_a^2\mu_b-\sigma_b^2\mu_a}{\sigma_a^2-\sigma_b^2}.
\end{align}

Based on the above threshold, the detection error based on Gaussian approximation is given as follows,
\begin{align} \label{Pe_Q_calculation}
p_e = \frac{1}{2}\bigg(Q\bigg(\frac{\mu_a-\hat{th}}{\sigma_a}\bigg)+Q\bigg(\frac{\hat{th}-\mu_b}{\sigma_b}\bigg)\bigg),
\end{align}
where $Q(\bullet)$ denotes Gaussian Q-function. Let $V(A)=\frac{\mu_a-\hat{th}}{\sigma_a}$ and $V(B)=\frac{\hat{th}-\mu_b}{\sigma_b}$. We have the following
\begin{align} \label{V_S_1}
V(A)=\frac{\sigma_b(\mu_a-\mu_b)\sqrt{1+\frac{2(\sigma_a^2-\sigma_b^2)\ln{\frac{\sigma_a}{\sigma_b}}}{(\mu_a-\mu_b)^2}}+\sigma_a(\mu_b-\mu_a)}{\sigma_a^2-\sigma_b^2}.
\end{align}

Considering the scenario with large $\lambda_s$ and $\lambda_s/\lambda_n$, we have 
\begin{align} \label{xiaoliang_1}
[\frac{2(\sigma_a^2-\sigma_b^2)\ln{\frac{\sigma_a}{\sigma_b}}}{(\mu_a-\mu_b)^2}]^2 \ = \ O(\lambda_s^{-2}) \ \ll \ 1.
\end{align}
Since $\sqrt{1+X} \approx 1+\frac{1}{2}X$ when $X \ll 1$, we have
\begin{align} \label{V_S_2}
V(A) \approx \frac{(\sigma_b-\sigma_a)(\mu_a-\mu_b)+\frac{\sigma_b(\sigma_a^2-\sigma_b^2)\ln{\frac{\sigma_a}{\sigma_b}}}{\mu_a-\mu_b}}{\sigma_a^2-\sigma_b^2}=-\frac{\mu_a-\mu_b}{\sigma_a+\sigma_b}+\frac{\sigma_b\ln{\frac{\sigma_a}{\sigma_b}}}{\mu_a-\mu_b}.
\end{align}
Similar approximation holds for $V(F)$, given by 
\begin{align} \label{V_F_2}
V(B) &\approx -\frac{\mu_a-\mu_b}{\sigma_a+\sigma_b}+\frac{\sigma_b\ln{\frac{\sigma_a}{\sigma_b}}}{\mu_a-\mu_b}.
\end{align}

The detection error can be approximated as follows,
\begin{align} \label{Pe_final_expression}
p_e \approx Q\bigg(\frac{\mu_a-\mu_b}{\sigma_a+\sigma_b}\bigg).
\end{align}

We aim to maximize $\frac{\mu_a-\mu_b}{\sigma_a+\sigma_b}$, which can be written as follows according to Eq. (\ref{SF_miu_sigma}). 
\begin{align} \label{Div_expression_1}
Div(\boldsymbol{\alpha},\boldsymbol{\lambda_{st}}(a),\boldsymbol{\lambda_{st}}(b))=\frac{\sum_{m=1}^{M} \alpha_m(\boldsymbol{\lambda_a}(m)-\boldsymbol{\lambda_b}(m))}{\sqrt{\sum_{m=1}^{M} \alpha_m^2 \boldsymbol{\lambda_a}(m)}+\sqrt{\sum_{m=1}^{M} \alpha_m^2 \boldsymbol{\lambda_b}(m)}}.
\end{align}
Notice that Eq. (\ref{Div_expression_1}) is a homogeneous function to $\boldsymbol{\alpha}$. Without loss of generality, with constraint $\Vert \boldsymbol{\alpha} \Vert=1$, an optimization problem is proposed,
\begin{gather}
\max\limits_{\boldsymbol{\alpha}} \ Div(\boldsymbol{\alpha},\boldsymbol{\lambda_{st}}(a),\boldsymbol{\lambda_{st}}(b)), \notag \\
s.t \ \Vert \boldsymbol{\alpha} \Vert=1.
\end{gather}

\subsection{Optimal Design Based On Objective Function} 
Due to complicated form of the denominator, maximizing $Div(\boldsymbol{\alpha},\boldsymbol{\lambda_{st}}(a),\boldsymbol{\lambda_{st}}(b))$ is intractable. We resort to maximizing a lower bound. Since
\begin{align} \label{PMI}
\sqrt{\sum_{m=1}^{M} \alpha_m^2 \boldsymbol{\lambda_a}(m)}+\sqrt{\sum_{m=1}^{M} \alpha_m^2 \boldsymbol{\lambda_b}(m)} \leq \sqrt{2}\sqrt{\sum_{m=1}^{M}\alpha_m^2(\boldsymbol{\lambda_a}(m)+\boldsymbol{\lambda_b}(m))},
\end{align}
a lower bound on $Div(\boldsymbol{\alpha},\boldsymbol{\lambda_{st}}(a),\boldsymbol{\lambda_{st}}(b))$, denoted as $Div^{LB}(\boldsymbol{\alpha},\boldsymbol{\lambda_{st}}(a),\boldsymbol{\lambda_{st}}(b))$, is given by
\begin{align} \label{Div_expression_2}
Div^{LB}(\boldsymbol{\alpha},\boldsymbol{\lambda_{st}}(a),\boldsymbol{\lambda_{st}}(b))=\frac{\sum_{m=1}^{M} \alpha_m(\boldsymbol{\lambda_a}(m)-\boldsymbol{\lambda_b}(m))}{\sqrt{\sum_{m=1}^{M}\alpha_m^2(\boldsymbol{\lambda_a}(m)+\boldsymbol{\lambda_b}(m))}}.
\end{align}

The optimization problem can be transformed to maximizing the following lower bound,
\begin{gather} \label{div_optimal_problem}
\mathop{\max}_{\boldsymbol{\alpha}} \frac{(\sum_{m=1}^{M} \alpha_m(\boldsymbol{\lambda_a}(m)-\boldsymbol{\lambda_b}(m)))^2}{\sum_{m=1}^{M} \alpha_m^2(\boldsymbol{\lambda_a}(m)+\boldsymbol{\lambda_b}(m))}, \notag \\
s.t \ \Vert \boldsymbol{\alpha} \Vert=1.
\end{gather} 

Based on Cauchy-Inequality, we have that,
\begin{align} \label{cauchy_solve}
&(Div^{LB}(\boldsymbol{\alpha},\boldsymbol{\lambda_{st}}(a),\boldsymbol{\lambda_{st}}(b)))^2=\frac{(\sum_{m=1}^{M} \alpha_m(\boldsymbol{\lambda_a}(m)-\boldsymbol{\lambda_b}(m)))^2}{\sum_{m=1}^{M} \alpha_m^2(\boldsymbol{\lambda_a}(m)+\boldsymbol{\lambda_b}(m))} \notag \\
&=\frac{(\sum_{m=1}^{M} (\alpha_m\sqrt{\boldsymbol{\lambda_a}(m)+\boldsymbol{\lambda_b}(m)})(\frac{\boldsymbol{\lambda_a}(m)-\boldsymbol{\lambda_b}(m)}{\sqrt{\boldsymbol{\lambda_a}(m)+\boldsymbol{\lambda_b}(m)}}))^2}{\sum_{m=1}^{M} (\alpha_m\sqrt{\boldsymbol{\lambda_a}(m)+\boldsymbol{\lambda_b}(m)})^2} 
\leq \sum_{m=1}^{M} \frac{(\boldsymbol{\lambda_a}(m)-\boldsymbol{\lambda_b}(m))^2}{\boldsymbol{\lambda_a}(m)+\boldsymbol{\lambda_b}(m)}.
\end{align} 

The equality holds if and only if,
\begin{align} \label{cauchy_equal_condition}
\forall m, \ \frac{\alpha_m\sqrt{\boldsymbol{\lambda_a}(m)+\boldsymbol{\lambda_b}(m)}}{\frac{\boldsymbol{\lambda_a}(m)-\boldsymbol{\lambda_b}(m)}{\sqrt{\boldsymbol{\lambda_a}(m)+\boldsymbol{\lambda_b}(m)}}}=const,
\end{align}
and one solution to the optimal weight coefficients is given by
\begin{align}  \label{alpha_solution}
K_m=\frac{\boldsymbol{\lambda_a}(m)-\boldsymbol{\lambda_b}(m)}{\boldsymbol{\lambda_a}(m)+\boldsymbol{\lambda_b}(m)}, \ \hat{\alpha}_m=\frac{K_m}{\sqrt{\sum_{m^{'}=1}^{M} K_{m^{'}}^2}}, \ m=1,...,M.
\end{align}

Based on Eq. (\ref{threshold_expression}) and Eq. (\ref{alpha_solution}), an optimal detection rule aiming to seperate two users can be proposed. Denote an indicator function $I$, if user $A$ is detected, $I=1$, and if user $B$ is detected, $I=0$. Assume that the detected photon numbers are $(N_1,...,N_M)$. For certain threshold $\hat{th}$, we have that,
\begin{align} \label{two_user_detection}
\sum_{m=1}^{M} \hat{\alpha}_mN_m \ \mathop{\lessgtr}_{I=1}^{I=0} \ \hat{th}.
\end{align}


\subsection{Error Analysis Compared With ML Detection}
Given the channel parameter vectors of user $A$ and $B$, we can adopt Eq. (\ref{Pe_final_expression}) and Eq. (\ref{cauchy_solve}) to obtain the minimum detection error probability, denoted as $p_e^{th}(A,B)$, given by
\begin{align} \label{pe_sf_expression}
p_e^{th}(A,B)=p_e(\boldsymbol{\lambda_{st}}(a),\boldsymbol{\lambda_{st}}(b))
=Q\bigg(\sqrt{\frac{(\mu_a-\mu_b)^2}{2(\sigma_a^2+\sigma_b^2)}}\bigg)
=Q\bigg(\sqrt{\sum_{m=1}^{M} \frac{(\boldsymbol{\lambda_a}(m)-\boldsymbol{\lambda_b}(m))^2}{2(\boldsymbol{\lambda_a}(m)+\boldsymbol{\lambda_b}(m))}}\bigg).
\end{align} 

Suppose that the received photon numbers of $M$ PMTs are $(N_1,...,N_M)$. If ML detection is adopted to two-user seperation problem, the detection rule and the error calculation are shown as follows,
\begin{gather} \label{ml_single}
\frac{\prod_{m=1}^{M} \frac{(\boldsymbol{\lambda_a}(m))^{N_m}}{N_m!} e^{-\boldsymbol{\lambda_a}(m)}}{\prod_{m=1}^{M} \frac{(\boldsymbol{\lambda_b}(m))^{N_m}}{N_m!} e^{-\boldsymbol{\lambda_b}(m)}} \ \mathop{\lessgtr}_{A}^{B} \ \frac{p(b)}{p(a)}, \\
\label{ml_single_pe}
p_e^{ML}(A,B)=\sum_{N_1,...,N_M=0}^{+\infty} \frac{1}{2} min\bigg(\prod_{m=1}^{M} \frac{(\boldsymbol{\lambda_a}(m))^{N_m}}{N_m!} e^{-\boldsymbol{\lambda_a}(m)}, \prod_{m=1}^{M} \frac{(\boldsymbol{\lambda_b}(m))^{N_m}}{N_m!} e^{-\boldsymbol{\lambda_b}(m)}\bigg).
\end{gather}

Here we assume $p(a)=p(b)=\frac{1}{2}$. It can be seen that the calculation of $p_e^{ML}(A,B)$ is much more complicated than that of $p_e^{th}(A,B)$. Moreover, the expression of $p_e^{th}(A,B)$ is more tractable, and the simulation results show that $p_e^{th}(A,B)$ is pretty close to $p_e^{ML}(A,B)$. Thus, we can greatly simplify the performance evaluation procedure while maintaining the optimal behavior through threshold detection.

\subsection{Sensitivity Analysis of the Optimal Threshold Detection} \label{sensi_ana}
The analysis of these part is based on Eq. (\ref{pe_sf_expression}). Since there exists inevitable channel estimation errors, we aim to investigate the sensitivity of detection error probability to imperfect channel estimation. Let
\begin{align} \label{gsf_def}
G_{a,b}=\sqrt{\sum_{m=1}^{M} \frac{(\boldsymbol{\lambda_a}(m)-\boldsymbol{\lambda_b}(m))^2}{2(\boldsymbol{\lambda_a}(m)+\boldsymbol{\lambda_b}(m))}},
\end{align}
while $p_e$ can be expressed as follows,
\begin{align}
p_e=\int_{G_{a,b}}^{+\infty} \frac{1}{\sqrt{2\pi}} exp(-\frac{x^2}{2}) dt.
\end{align}
Assume that the estimation bias occurs on $\boldsymbol{\lambda_a}(m)$. Then, the absolute value of $\frac{\partial p_e}{\partial \boldsymbol{\lambda_a}(m)}$, which can characterize the detection sensitivity, is shown as follows,
\begin{align} \label{deriv_pe_lams}
\bigg| \frac{\partial p_e}{\partial \boldsymbol{\lambda_a}(m)} \bigg|=\frac{1}{\sqrt{2\pi}} exp(-\frac{{G_{a,b}}^2}{2}) \bigg|\frac{\partial G_{a,b}}{\partial \boldsymbol{\lambda_a}(m)} \bigg|
=\frac{1}{2\sqrt{2\pi}G_{a,b}} \ exp(-\frac{{G_{a,b}}^2}{2}) \ \bigg|1-\bigg(\frac{2}{1+\frac{\boldsymbol{\lambda_a}(m)}{\boldsymbol{\lambda_b}(m)}}\bigg)^2 \bigg|.  
\end{align}

Denote that $I_1=\frac{1}{2\sqrt{2\pi}G_{a,b}} \ exp(-\frac{{G_{a,b}}^2}{2})$ and $I_2=\bigg|1-\bigg(\frac{2}{1+\frac{\boldsymbol{\lambda_a}(m)}{\boldsymbol{\lambda_b}(m)}}\bigg)^2 \bigg|$. Letting $x=\frac{\boldsymbol{\lambda_a}(m)}{\boldsymbol{\lambda_b}(m)}$, obviously we have $x>0$. Thus, we have,
\begin{align} \label{I_b_f_x}
f(x)=1-\bigg(\frac{2}{1+x}\bigg)^2, \ I_2=|f(x)|, \ x>0.
\end{align}
Noting that $f(x)$ is monotonically increasing with respect to $x$, we have
\begin{align} \label{I_b_ub}
I_2=|f(x)| \leq \max\{\lim\limits_{x \rightarrow 0}|f(x)|, \lim\limits_{x \rightarrow \infty}|f(x)|\}=3.
\end{align}

In this way, an upper bound $I^{UB}_2=3$ is raised up to facilitate further calculation, implying that the detection error probability is not sensitive to imperfect estimation. Assume the following constraints on $\boldsymbol{\lambda_{st}}(a)$ and $\boldsymbol{\lambda_{st}}(b)$, we have
\begin{gather} \label{AB_constraint}
\sum_{m=1}^{M} (\boldsymbol{\lambda_a}(m)-\boldsymbol{\lambda_b}(m))^2 \geq C, \notag \\
\boldsymbol{\lambda_a}(m) \leq D, \ \boldsymbol{\lambda_b}(m) \leq D, m=1,...,M.
\end{gather}
According to Eq. (\ref{gsf_def}), a lower bound on $G_{a,b}$ can be calculated,
\begin{align} \label{G_AB}
G_{a,b} = \sqrt{\sum_{m=1}^{M} \frac{(\boldsymbol{\lambda_a}(m)-\boldsymbol{\lambda_b}(m))^2}{2(\boldsymbol{\lambda_a}(m)+\boldsymbol{\lambda_b}(m))}} \geq \sqrt{\frac{\sum_{m=1}^{M} (\boldsymbol{\lambda_a}(m)-\boldsymbol{\lambda_b}(m))^2}{4D}} \geq \sqrt{\frac{C}{4D}}.
\end{align}
Set that $G^{LB}=\sqrt{\frac{C}{4D}}$. Assume that the detection error probabilities with and without perfect channel estimation are $p_0$ and $p_1$, respectively. Assume that $\boldsymbol{\lambda_a}(m) \in (\lambda_{0}-\Delta, \lambda_{0}+\Delta)$, while $\lambda_{0}$ is the precise channel parameter and $\Delta$ is the maximum estimation bias. We have the following results on the detection sensitivity to channel estimation.

\begin{theorem} \label{sensi_analysis}
	For any $\Delta > 0$, there exists a constant $K=\frac{3exp(-\frac{C}{8D})}{\sqrt{2\pi\frac{C}{D}}}$, such that $|p_1-p_0| \leq K\Delta$ for $\forall \boldsymbol{\lambda_a}(m) \in (\lambda_{0}-\Delta, \lambda_{0}+\Delta)$. 
\end{theorem}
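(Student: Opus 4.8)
The plan is to prove the estimate by showing that $p_e$, regarded as a function of the single coordinate $\boldsymbol{\lambda_a}(m)$ (through $G_{a,b}$), is Lipschitz with constant $K$ on the feasible region determined by Eq. (\ref{AB_constraint}), and then to apply the mean value theorem along the segment joining the true value $\lambda_0$ to the estimated value. First I would recall from Eq. (\ref{pe_sf_expression}) that $p_e=Q(G_{a,b})$ with $G_{a,b}$ as in Eq. (\ref{gsf_def}), so that the partial derivative equals the product $I_1 I_2$ already computed in Eq. (\ref{deriv_pe_lams}), where $I_1=\frac{1}{2\sqrt{2\pi}\,G_{a,b}}\exp(-G_{a,b}^2/2)$ and $I_2=|f(x)|$ with $x=\boldsymbol{\lambda_a}(m)/\boldsymbol{\lambda_b}(m)$.

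Next I would bound the two factors uniformly over the feasible set. For $I_2$ the bound $I_2\le 3$ is already established in Eq. (\ref{I_b_ub}) from the monotonicity of $f$ in Eq. (\ref{I_b_f_x}). For $I_1$ I would note that $g\mapsto \frac{1}{g}e^{-g^2/2}$ is strictly decreasing on $(0,\infty)$, since its derivative is $-e^{-g^2/2}(g^{-2}+1)<0$; hence $I_1$ is largest where $G_{a,b}$ is smallest, and combining with the lower bound $G_{a,b}\ge G^{LB}=\sqrt{C/(4D)}$ from Eq. (\ref{G_AB}) gives $I_1\le \frac{1}{2\sqrt{2\pi}\,G^{LB}}\exp(-(G^{LB})^2/2)$. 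Multiplying and simplifying yields
\[
\left|\frac{\partial p_e}{\partial \boldsymbol{\lambda_a}(m)}\right|\le \frac{3}{2\sqrt{2\pi}\sqrt{C/(4D)}}\exp\!\big(-\tfrac{C}{8D}\big)=\frac{3\exp(-C/(8D))}{\sqrt{2\pi C/D}}=K,
\]
which is exactly the claimed constant.

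Finally, since this derivative bound holds at every feasible point, and in particular at every point of the segment $\{(1-t)\lambda_0+t\,\boldsymbol{\lambda_a}(m):t\in[0,1]\}$, along which $G_{a,b}$ stays above $G^{LB}$ and $p_e$ is smooth, the mean value theorem (equivalently, integrating $\partial p_e/\partial\boldsymbol{\lambda_a}(m)$ along the segment) gives $|p_1-p_0|=|p_e(\boldsymbol{\lambda_a}(m))-p_e(\lambda_0)|\le K\,|\boldsymbol{\lambda_a}(m)-\lambda_0|\le K\Delta$ for every $\boldsymbol{\lambda_a}(m)\in(\lambda_0-\Delta,\lambda_0+\Delta)$, which is the assertion.

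The main obstacle I anticipate is not the differentiation but making the uniformity precise: one must ensure that the feasibility constraints of Eq. (\ref{AB_constraint})—in particular $\boldsymbol{\lambda_a}(m)\le D$ and $\sum_m(\boldsymbol{\lambda_a}(m)-\boldsymbol{\lambda_b}(m))^2\ge C$—remain valid as $\boldsymbol{\lambda_a}(m)$ ranges over the entire perturbation interval, so that the lower bound $G_{a,b}\ge G^{LB}$, and hence the Lipschitz estimate, hold everywhere on the segment rather than only at its endpoints. A mild assumption that the interval $(\lambda_0-\Delta,\lambda_0+\Delta)$ lies inside the feasible set (or a short monotonicity argument on $G_{a,b}$ in $\boldsymbol{\lambda_a}(m)$) closes this gap; one should also state explicitly that $p_e$ here denotes the closed-form expression (\ref{pe_sf_expression}) evaluated at the (possibly erroneous) parameters. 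Modulo these bookkeeping points, the result follows by directly chaining the bounds already derived in Eqs. (\ref{deriv_pe_lams})–(\ref{G_AB}).
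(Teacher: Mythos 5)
Your proposal is correct and follows essentially the same route as the paper: bound $\left|\partial p_e/\partial \boldsymbol{\lambda_a}(m)\right|$ by the product $I_1^{UB} I_2^{UB}$ using $I_2\le 3$ and the monotone decrease of $I_1$ in $G_{a,b}$ together with $G_{a,b}\ge\sqrt{C/(4D)}$, then invoke the mean value theorem to get $|p_1-p_0|\le K\Delta$. Your closing remark about requiring the constraints of Eq.~(\ref{AB_constraint}) to hold along the whole perturbation segment is a fair bookkeeping refinement that the paper's proof leaves implicit, but it does not change the argument.
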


\begin{proof}
	Note that $p_1$ is a function of $\boldsymbol{\lambda_a}(m)$. Refer to the definition of integral, it's apparent that,
	\begin{align} \label{sensi_proof_1}
	|p_1-p_0| \ \leq \ \mathop{max}_{\boldsymbol{\lambda_a}(m)}\bigg(\bigg| \frac{\partial p_e}{\partial \boldsymbol{\lambda_a}(m)} \bigg|\bigg) \ |\boldsymbol{\lambda_a}(m)-\lambda_{0}| \ \leq \ \mathop{max}_{\boldsymbol{\lambda_a}(m)}\bigg(\bigg| \frac{\partial p_e}{\partial \boldsymbol{\lambda_a}(m)} \bigg|\bigg) \ \Delta.
	\end{align}
	Based on Eq. (\ref{deriv_pe_lams}) and Eq. (\ref{I_b_ub}), the remaining work is to find the upper bound on $I_1$. Noting that $I_1$ is decreasing with $G_{a,b}$, and a lower bound $G^{LB}$ can be obtained based on power constraint and space division constraint. According to Eq. (\ref{sensi_proof_1}), we have
	\begin{gather}
	|p_1-p_0| \ \leq \ I^{UB}_1 I^{UB}_2 \Delta
	= \frac{3}{2\sqrt{2\pi}G^{LB}} \ exp(-\frac{(G^{LB})^2}{2}) \ \Delta
	= \frac{3exp(-\frac{C}{8D})}{\sqrt{2\pi\frac{C}{D}}} \ \Delta.
	\end{gather}
	Thus, letting $K=\frac{3exp(-\frac{C}{8D})}{\sqrt{2\pi\frac{C}{D}}}$, the theorem is proved. 
\end{proof} 

%

The above theorem indicates that once the constraints are satisfied, the system performance degradeness caused by imperfect estimation will be bounded. Specifically, if two users are properly seperated and obey the transmission power constraint, the system performance is not sensitive to imperfect channel estimation.

\graphicspath{{figures/FigSigDetect/}}
\def\degree{${}^{\circ}$}

\section{Signal Detection Based On Space Division: General Multiuser Case}

In the previous Section, a preliminary threshold detection method aiming to seperate two users is proposed. In fact, ``two users'' can also be treated as ``two states''. Based on such idea and adopting the previous method, we can further investigate the signal detection method of general multiuser system. It turns out that similar to the method for two-user system, the proposed multiuser detection method also facilitate the performance analysis with IUI.

\subsection{Basic Settings}  
Assume that all channel link gains (channel parameters) are known to the receiver. For all $M$ PMTs, the detected pulse numbers within one symbol duration are given by $\boldsymbol{N}=(N_1,N_2,...,N_M)$. Assume that user $A$ is transmitting and its channel parameter vector is $\boldsymbol{\lambda_{st}}(a)=(\boldsymbol{\lambda_a}(1),\boldsymbol{\lambda_a}(2),...,\boldsymbol{\lambda_a}(M))$. We have that
\begin{align} \label{disting_N_distri}
	P(N_m=k)=e^{-\boldsymbol{\lambda_a}(m)} \frac{(\boldsymbol{\lambda_a}(m))^k}{k!}, \ \forall m \in \{1,2,...,M\}.
\end{align}  
In order to simplify the expression, denote that
\begin{align} \label{pn_lam_simple}
	p(\boldsymbol{N} | \boldsymbol{\lambda_{st}}(a))=\prod_{m=1}^{M} \frac{(\boldsymbol{\lambda_a}(m))^{N_m}}{N_m!} e^{-\boldsymbol{\lambda_a}(m)}.
\end{align}
For users $A$ and $B$, let
\begin{align} \label{pe_divide}
	P_e(A,B)=\frac{1}{2} (P_e(A|B)+P_e(B|A))
\end{align}
denote the detection error probability between the two users based on equal prior probability $p(A)=p(B)=\frac{1}{2}$. 

For the multiple users situation, assume that the $K$ interfering users are $(B_1,B_2,...,B_K)$ with channel parameter $\boldsymbol{\lambda_{st}}(b_k)=(\boldsymbol{\lambda_{b_k}}(1),\boldsymbol{\lambda_{b_k}}(2),...,\boldsymbol{\lambda_{b_k}}(M))$ for user $B_k$. Assume $p(a=0)=p(a=1)=\frac{1}{2}$ and $p(b_k=0)=p(b_k=1)=\frac{1}{2}$ for $k=1,...,K$. Moreover, denote the mean number of background noise for PMT array as $(\boldsymbol{\lambda_n}(1),...,\boldsymbol{\lambda_n}(M))$. Obviously there are totally $2^K$ combinations of $(b_1,...,b_K)$, denoted as modes $(\varphi_1,...,\varphi_{2^K})$. In this way, the channel parameters of interfering signal can be simplified as follows,
\begin{align} \label{chan_para_phi}
	\boldsymbol{\lambda_{\varphi^{*}}}(m)=\sum_{k=1}^{K} b_k^{*} \boldsymbol{\lambda_{b_k}}(m), \ m=1,...,M; \ \varphi^{*}=(b_1^{*},b_2^{*},...,b_{K}^{*}).
\end{align}

For user group $(A,B_1,B_2,...,B_{K})$, let $\boldsymbol{C}=(C_1,...,C_{2^K})$ and $\boldsymbol{D}=(D_1,...,D_{2^K})$ denote the states corresponding to $a=1$ and $a=0$, respectively, such that,
\begin{align} \label{T_arrange}
	\boldsymbol{\lambda_{C_{k}}}(m)=\boldsymbol{\lambda_a}(m)+\boldsymbol{\lambda_{\varphi_{k}}}(m)+\boldsymbol{\lambda_n}(m), \ &m=1,...,M, \ k=1,...,2^K; \notag \\
	\boldsymbol{\lambda_{D_{k}}}(m)=\boldsymbol{\lambda_{\varphi_{k}}}(m)+\boldsymbol{\lambda_n}(m), \ m=&1,...,M, \ k=1,...,2^K.
\end{align}



\subsection{ML Detection}
Firstly, we turn to ML detection as a baseline method for multiuser detection. Still for user group $(A,B_1,B_2,...,B_{K})$, based on Eq. (\ref{T_arrange}), the binary decision rule can be given as follows,
\begin{align} \label{ml_multi_rule}
\frac{\sum_{k=1}^{2^K} \prod_{m=1}^{M} \frac{(\boldsymbol{\lambda_{C_{k}}}(m))^{N_m}}{N_m!} e^{-\boldsymbol{\lambda_{C_{k}}}(m)}}{\sum_{k=1}^{2^K} \prod_{m=1}^{M} \frac{(\boldsymbol{\lambda_{D_{k}}}(m))^{N_m}}{N_m!} e^{-\boldsymbol{\lambda_{D_{k}}}(m)}} \ \mathop{\lessgtr}_{a=1}^{a=0} \ \frac{p(a=0)}{p(a=1)}.
\end{align}

Since $p(a=0)=p(a=1)=\frac{1}{2}$, the ML detection error probability can be calculated as follows,
\begin{align} \label{ml_pe}
P_e^{ML}(A,B_1,...,B_{K})=& \notag \\
\sum_{N_1,...,N_M=0}^{+\infty} \frac{1}{2^{K+1}} \cdot \min\bigg\{ \sum_{k=1}^{2^K} \prod_{m=1}^{M} \frac{(\boldsymbol{\lambda_{C_{k}}}(m))^{N_m}}{N_m!} e^{-\boldsymbol{\lambda_{C_{k}}}(m)}, &\sum_{k=1}^{2^K} \prod_{m=1}^{M} \frac{(\boldsymbol{\lambda_{D_{k}}}(m))^{N_m}}{N_m!} e^{-\boldsymbol{\lambda_{D_{k}}}(m)}\bigg\}.
\end{align}
The ML detection suffers a high computational complexity, especially when $M$ or $K$ becomes larger. It's expected to develop a new detection approach with lower real-time computational complexiy and tractable detection error analysis.

\subsection{Successive Elimination Method} \label{re}
A detection approach based on successive comparison and elimination is provided. The fundamental thought is that the signal detection with inter-user interference can be achieved by selecting the most likely state $T_{i_0}$ from $\boldsymbol{T}$ using threshold detection, leading to desirable symbol $\hat{a}$. Instead of precisely locating state $T_{i_0}$, we need to know whether $T_{i_0}$ belongs to set $\boldsymbol{C}$ or $\boldsymbol{D}$. Let
\begin{align} \label{u_ci_dj_N}
U(C_i,D_j,\boldsymbol{N})=\sum_{m=1}^{M} \alpha_m(\boldsymbol{\lambda_{C_i}}(m),\boldsymbol{\lambda_{D_j}}(m)) \cdot N_m-th(\boldsymbol{\lambda_{C_i}},\boldsymbol{\lambda_{D_j}}),
\end{align} 
where threshold $th(\boldsymbol{\lambda_{C_i}},\boldsymbol{\lambda_{D_j}})$ is given in Eq. (\ref{threshold_expression}). Based on the arguments in Section \ref{sec.div_function}, if $U(C_i,D_j,\boldsymbol{N})>0$, $C_i$ is more likely than $D_j$, and vice versa. Recall the error analysis in the previous sections, we have that,
\begin{align}
&P_e^{th}(C_i,D_j)=\frac{1}{2} (P_e^{th}(C_i|D_j)+P_e^{th}(D_j|C_i)) \notag \\
=\frac{1}{2} \bigg(p(&U(C_i,D_j,\boldsymbol{N})>0|D_j)+p(U(C_i,D_j,\boldsymbol{N})<0|C_{i})\bigg).
\end{align}
In this way, a multiuser detection method based on successive elimination tactic can be proposed. Based on two sets $\boldsymbol{C}=\{C_i\}_{i=1}^{2^K}$ and $\boldsymbol{D}=\{D_j\}_{j=1}^{2^K}$, the corresponding channel parameters can be given by Eq. (\ref{T_arrange}). With the detected photon numbers $\boldsymbol{N}$, the signal detection can be performed. During Round $1$, signal model $C_1$ is selected and compared with $(D_1,...,D_{2^K})$ sequentially. If $U(C_1,D_j,\boldsymbol{N})>0$, then the corresponding $D_j$ will be eliminated. As for $C_1$, if there exists at least one $D_j$ such that $U(C_1,D_j,\boldsymbol{N})<0$, $C_1$ will be eliminated when Round $1$ is over. The surviving members of $\boldsymbol{C}$ and $\boldsymbol{D}$ will enter Round $2$ and the same operations will be repeated. The procedure will be carried on until either $\boldsymbol{C}$ or $\boldsymbol{D}$ becomes empty. Obviously, if $\boldsymbol{C}$ finally becomes empty, detect $\hat{a}=0$. Otherwise, if $\boldsymbol{D}$ becomes empty, detect $\hat{a}=1$. 

%
%
%

Based on the proposed algorithm, we can propose a closed-form upper bound on the detection error probability, denoted as $P^{th}_e$, which can be calculated by the channel parameters aforementioned. 
\begin{theorem}
	Assume that $K$ is the total users number, $M$ is the total PMT number, $\boldsymbol{\lambda_a}(m)$ is the channel parameter between the desired user $A$ and $m$-th PMT, $\boldsymbol{\lambda_{\varphi_i}}(m)$ is the channel parameter summation (considering the additiveness of Poisson distribution) between interfering users $(B_1,...,B_K)$ and $m$-th PMT when interfering mode is $\varphi_i$. This explanation also adapts to $\boldsymbol{\lambda_{\varphi_j}}(m)$. $\boldsymbol{\lambda_n}(m)$ is the background noise intensity of the $m$-th PMT. We have that, 
	\begin{align} \label{pe_th_up_expression}
	P^{th}_e=\frac{1}{2^K} \bigg[ \sum_{i=1}^{2^{K}} \sum_{j=1}^{2^{K}} Q\bigg( \sqrt{\sum_{m=1}^{M} \frac{(\boldsymbol{\lambda_a}(m)+\boldsymbol{\lambda_{\varphi_i}}(m)-\boldsymbol{\lambda_{\varphi_j}}(m))^2}{2(\boldsymbol{\lambda_a}(m)+\boldsymbol{\lambda_{\varphi_i}}(m)+\boldsymbol{\lambda_{\varphi_j}}(m)+2\boldsymbol{\lambda_n}(m))}} \bigg) \bigg].
	\end{align}
\end{theorem}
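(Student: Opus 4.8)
The plan is to control the error probability of the successive elimination procedure by a union bound over the pairwise comparisons it performs, and then to recognize every term of that bound as a two‑user threshold‑detection error already evaluated in Eq.~(\ref{pe_sf_expression}). First I would fix the averaging: because $p(a=0)=p(a=1)=\tfrac12$ and the $2^K$ interfering patterns $\varphi_i$ are equiprobable, the true channel state equals $C_i$ with probability $2^{-(K+1)}$ for each $i$ and equals $D_j$ with probability $2^{-(K+1)}$ for each $j$, so that
\begin{align}
P_e=\frac{1}{2^{K+1}}\left(\sum_{i=1}^{2^K}P(\hat{a}=0|C_i)+\sum_{j=1}^{2^K}P(\hat{a}=1|D_j)\right).
\end{align}

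The heart of the argument is to bound $P(\hat{a}=0|C_i)$. If the true state is $C_i$ and the output is $\hat{a}=0$, then $\boldsymbol{C}$ has been emptied, so $C_i$ in particular must have been discarded in some round; by the elimination rule a member of $\boldsymbol{C}$ is discarded only when, being the active state of its round, it loses a comparison $U(\,\cdot\,,D_{j'},\boldsymbol{N})<0$ to some $D_{j'}$ still in play, and relaxing ``still in play'' to ``arbitrary'' only enlarges the event. Hence $\{\hat{a}=0\}\subseteq\bigcup_{j=1}^{2^K}\{U(C_i,D_j,\boldsymbol{N})<0\}$, and the union bound together with the identities stated just above the theorem gives $P(\hat{a}=0|C_i)\le\sum_{j=1}^{2^K}p\big(U(C_i,D_j,\boldsymbol{N})<0|C_i\big)=\sum_{j=1}^{2^K}P_e^{th}(D_j|C_i)$; the symmetric estimate is $P(\hat{a}=1|D_j)\le\sum_{i=1}^{2^K}P_e^{th}(C_i|D_j)$.

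Inserting these two estimates into the decomposition, and using $P_e^{th}(D_j|C_i)+P_e^{th}(C_i|D_j)=2\,p_e^{th}(C_i,D_j)$ with $p_e^{th}(C_i,D_j)$ taken from Eq.~(\ref{pe_sf_expression}) (the optimal weights of Eq.~(\ref{alpha_solution}) and the threshold of Eq.~(\ref{threshold_expression}) are exactly the ones used inside $U$), I arrive at
\begin{align}
P_e\le\frac{1}{2^{K+1}}\sum_{i=1}^{2^K}\sum_{j=1}^{2^K}2\,p_e^{th}(C_i,D_j)=\frac{1}{2^K}\sum_{i=1}^{2^K}\sum_{j=1}^{2^K}Q\!\left(\sqrt{\sum_{m=1}^{M}\frac{(\boldsymbol{\lambda_{C_i}}(m)-\boldsymbol{\lambda_{D_j}}(m))^2}{2(\boldsymbol{\lambda_{C_i}}(m)+\boldsymbol{\lambda_{D_j}}(m))}}\right).
\end{align}
Finally I would substitute $\boldsymbol{\lambda_{C_i}}(m)=\boldsymbol{\lambda_a}(m)+\boldsymbol{\lambda_{\varphi_i}}(m)+\boldsymbol{\lambda_n}(m)$ and $\boldsymbol{\lambda_{D_j}}(m)=\boldsymbol{\lambda_{\varphi_j}}(m)+\boldsymbol{\lambda_n}(m)$ from Eq.~(\ref{T_arrange}): the background term cancels in the numerator, the denominator becomes $\boldsymbol{\lambda_a}(m)+\boldsymbol{\lambda_{\varphi_i}}(m)+\boldsymbol{\lambda_{\varphi_j}}(m)+2\boldsymbol{\lambda_n}(m)$, and Eq.~(\ref{pe_th_up_expression}) drops out, with $P^{th}_e$ identified as precisely this upper bound.

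The main obstacle is the set inclusion $\{\hat{a}=0\}\subseteq\bigcup_j\{U(C_i,D_j,\boldsymbol{N})<0\}$: one must trace the round‑by‑round dynamics to see that $\boldsymbol{C}$ becoming empty forces $C_i$ to have failed at least one pairwise test, and must justify that forgetting which $D_j$ were still alive costs nothing — this is exactly the step that decouples the estimate into the clean double sum. It is also worth recording explicitly that, since Eq.~(\ref{pe_sf_expression}) itself rests on the Gaussian approximation of the Poisson weighted sum and the large‑$\lambda_s$ expansion in Eq.~(\ref{V_S_2}), the quantity $P^{th}_e$ is an upper bound valid in that regime; its closeness to the ML error $P_e^{ML}(A,B_1,\ldots,B_K)$ is a separate point, to be confirmed numerically.
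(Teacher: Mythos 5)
Your proposal is correct and follows essentially the same route as the paper's proof: the same $2^{-(K+1)}$ decomposition over the states $C_i$, $D_j$, the same key set inclusion (you argue $\{\hat{a}=0\}\subseteq\bigcup_j\{U(C_i,D_j,\boldsymbol{N})<0\}$ directly, the paper equivalently shows $\bigcap_j\{U(C_i,D_j,\boldsymbol{N})>0\}\subseteq\{\hat{a}=1\}$ and passes to complements), followed by the union bound, identification of each pairwise term with $P_e^{th}$ from Eq.~(\ref{pe_sf_expression}), and substitution of Eq.~(\ref{T_arrange}) so the noise cancels in the numerator. Your closing remark that the bound inherits the Gaussian-approximation regime of Eq.~(\ref{pe_sf_expression}) is an accurate caveat, consistent with the paper.
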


\begin{proof}
	Denote $\mathcal{N}_1=\{\boldsymbol{N} \ | \ \hat{a}(\boldsymbol{N})=1\}$ and $\mathcal{N}_0=\{\boldsymbol{N} \ | \ \hat{a}(\boldsymbol{N})=0\}$. We have 
	\begin{align} \label{pe_re_first}
	P_e=&\frac{1}{2} \bigg(\sum_{\boldsymbol{N} \in \mathcal{N}_1} P(\boldsymbol{N}|a=0)+\sum_{\boldsymbol{N} \in \mathcal{N}_0} P(\boldsymbol{N}|a=1)\bigg) \notag \\
	=&\frac{1}{2^{K+1}} \bigg(\sum_{i=1}^{2^K} \sum_{\boldsymbol{N} \in \mathcal{N}_1} P(\boldsymbol{N}|\boldsymbol{\lambda_{D_i}}) + \sum_{i=1}^{2^K} \sum_{\boldsymbol{N} \in \mathcal{N}_0} P(\boldsymbol{N}|\boldsymbol{\lambda_{C_i}})\bigg).
	\end{align}
	
	For each $i$, we have,
	\begin{align} \label{right_set}
	\sum_{\boldsymbol{N} \in \mathcal{N}_0} P(\boldsymbol{N}|\boldsymbol{\lambda_{C_{i}}})=1-\sum_{\boldsymbol{N} \in \mathcal{N}_1} P(\boldsymbol{N}|\boldsymbol{\lambda_{C_{i}}}).
	\end{align}
	Construct
	\begin{align} \label{n_ci0_def}
	\mathcal{N}^{C_{i}}_1=\{\boldsymbol{N} \ | \ \mathop{\cap}_{j=1}^{2^K} [U(C_{i},D_j,\boldsymbol{N})>0]\}.
	\end{align}
	For all $\boldsymbol{N} \in \mathcal{N}^{C_{i}}_1$, we have $\hat{a}=1$, thus $\mathcal{N}^{C_{i}}_1 \subseteq \mathcal{N}_1$. Then, we have
	\begin{align} \label{pe_re_second}
	\sum_{\boldsymbol{N} \in \mathcal{N}_0} P(\boldsymbol{N}|\boldsymbol{\lambda_{C_{i}}})&=1-\sum_{\boldsymbol{N} \in \mathcal{N}_1} P(\boldsymbol{N}|\boldsymbol{\lambda_{C_{i}}}) \notag \\
	&\leq 1-\sum_{\boldsymbol{N} \in \mathcal{N}^{C_{i}}_1}  P(\boldsymbol{N}|\boldsymbol{\lambda_{C_{i}}}) \notag \\
	&=1-P(\mathop{\cap}_{j=1}^{2^K} [U(C_{i},D_j,\boldsymbol{N})>0|C_{i}]) \notag \\
	&=P(\mathop{\cup}_{j=1}^{2^K} [U(C_{i},D_j,\boldsymbol{N})<0|C_{i}]) \notag \\
	&\leq \sum_{j=1}^{2^K} P(U(C_{i},D_j,\boldsymbol{N})<0|C_{i})=\sum_{j=1}^{2^K} P_e^{th}(D_j |C_{i}).
	\end{align}
	
	Based on Eq. (\ref{pe_sf_expression}), Eq. (\ref{pe_divide}), Eq. (\ref{pe_re_first}) and Eq. (\ref{pe_re_second}), we have 
	\begin{align} \label{pe_th_multiuser}
	P_e &\leq \frac{1}{2^{K+1}} \bigg(\sum_{i=1}^{2^K} \sum_{j=1}^{2^K} P_e^{th}(C_i |D_j) + \sum_{i=1}^{2^K} \sum_{j=1}^{2^K} P_e^{th}(D_j |C_i)\bigg) \notag \\
	&=\frac{1}{2^K} \bigg( \sum_{i=1}^{2^{K}} \sum_{j=1}^{2^{K}} \frac{1}{2} (P_e^{th}(C_i |D_j)+P_e^{th}(D_j |C_i)) \bigg)  \notag \\
	&=\frac{1}{2^K} \bigg( \sum_{i=1}^{2^{K}} \sum_{j=1}^{2^{K}}  P_e^{th}(C_i,D_j)  \bigg) \notag \\
	&=\frac{1}{2^K} \bigg[ \sum_{i=1}^{2^{K}} \sum_{j=1}^{2^{K}} Q\bigg( \sqrt{\sum_{m=1}^{M} \frac{(\boldsymbol{\lambda_a}(m)+\boldsymbol{\lambda_{\varphi_i}}(m)-\boldsymbol{\lambda_{\varphi_j}}(m))^2}{2(\boldsymbol{\lambda_a}(m)+\boldsymbol{\lambda_{\varphi_i}}(m)+\boldsymbol{\lambda_{\varphi_j}}(m)+2\boldsymbol{\lambda_n}(m))}} \bigg) \bigg].
	\end{align}
\end{proof}


Since the weight vectors and detection threshold can be calculated in an offline manner, the successive elimination method shows lower real-time computational complexity, since $M$-order Poisson possibility multiplication is replaced by a linear weighted sum. 

\def\degree{${}^{\circ}$}



%

\section{Numerical Simulation}\label{simu}

\subsection{Gaussian Approximation of Poisson Weighted Sum}
We justify the approximation of Poisson weighted sum $W(\boldsymbol{\alpha},\boldsymbol{\lambda})$ using the corresponding Gaussian approximation $W^G(\boldsymbol{\alpha},\boldsymbol{\lambda})$. Assume that the space division receiver assembles three PMTs with a single user. Set the mean number of photons for the three sectors as $(\lambda_1,\lambda_2,\lambda_3)=(10,15,20)$ and the weight vector is randomly generated satisfying $\Vert \boldsymbol{\alpha} \Vert=1$. The PDF curves of $W(\boldsymbol{\alpha},\boldsymbol{\lambda})$ and $W^G(\boldsymbol{\alpha},\boldsymbol{\lambda})$ are depicted in Fig.~\ref{pdf_fit}, validating Gaussian approximation.
\begin{figure}[htbp]
	\centering
	\includegraphics[width=4in]{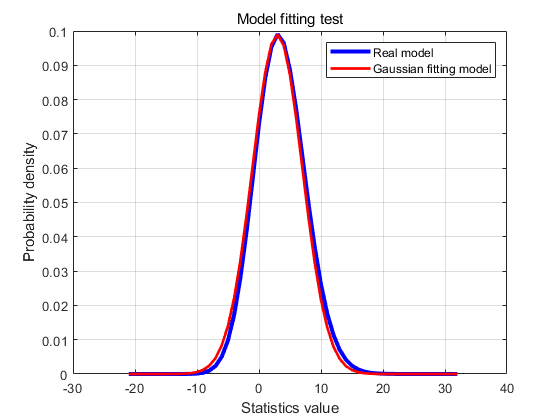}
	\caption{The probability density function of Poisson weighted sum and its Gaussian approxiamtion.} \label{pdf_fit}
\end{figure}




\subsection{Performance of Space Division Receiver: Two Users Seperation}
In this Subsection, the separability of a two-user system will be tested. The elevation angle of each transmitting LED and each receiving PMT is $30^{\circ}$, and each LED is straight to the receiver. Assume that the FOV angle of each PMT is $40^{\circ}$ and the full angular beamwidth of each LED is $60^{\circ}$. In this way, the 3D transceiver structure can be shown in Fig. \ref{trans_struct}. Recall that the detailed division of azimuth angle field is shown in Fig.~\ref{system_field}, while the radial distance between each user and the receiver is $100m$. Monte-Carlo simulations show that the Poisson-arriving signal from a single user can be depicted by three dominated PMTs, thus leading to inevitable inter-user interference. 
\begin{figure}[htbp]
	\centering
	\includegraphics[width=6in]{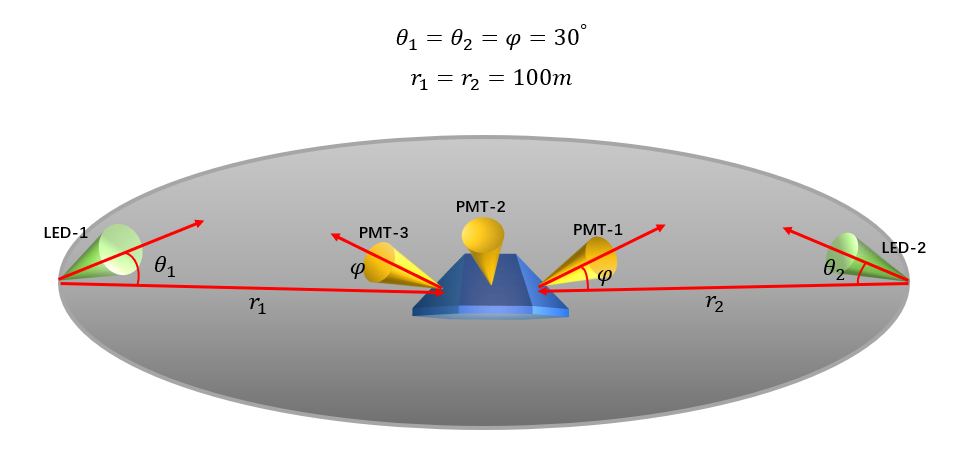}
	\caption{The 3D space division transceiver structure.} \label{trans_struct}
\end{figure} 

The detection error rate with respect to different angles of $A$ and $B$ is simulated. The performances are analyzed from two different perspectives. Firstly, set the angle difference between $A$ and $B$, which is denoted as $\Delta\theta(A,B)$, to be 5$^{\circ}$, 10$^{\circ}$ and 20$^{\circ}$ ($\theta(B)=\theta(A)+\Delta\theta(A,B)$), and change $\theta(A)$ from 0$^{\circ}$ to 120$^{\circ}$ (FOV of three PMTs) with step 5$^{\circ}$. Fig.~\ref{DER_pos_change} depicts the detection error rates under different $\theta(A)$. Both the radical distance of user $A$ ($r(A)$) and user $B$ ($r(B)$) are set to be $100m$. It shows that a larger $\Delta\theta(A,B)$ leads to a lower detection error rate. Moreover, there exists an obvious fluctuation when changing $\theta(A)$ from 0$^{\circ}$ to 120$^{\circ}$, indicating that the error rate is lower when $A$ locates at the edge of a FOV according to Fig.~\ref{system_field}. To justify this phenomenon more accurately, we change $\theta(A)$ from 0$^{\circ}$ to 40$^{\circ}$ with step 1$^{\circ}$, and fix $\Delta\theta(A,B)=8^{\circ}$. The simulated detection error rate is shown in Fig.~\ref{DER_single_sector}, proving that the detection error rate with $A$ locating at the edge of a FOV is lower than that with $A$ locating at the center of a FOV.   
 
%

\begin{figure}[htbp]
	\centering
	\begin{minipage}[t]{0.48\textwidth}  
		\centering
		\includegraphics[width=8.5cm]{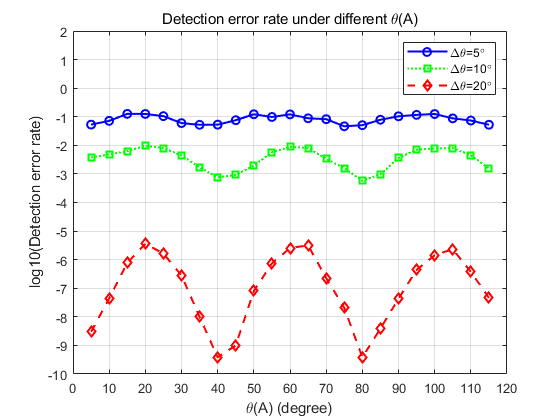}
		\caption{The detection error rate under different $\theta(A)$.}
		\label{DER_pos_change}
	\end{minipage}
	\begin{minipage}[t]{0.48\textwidth}  
		\centering
		\includegraphics[width=8.5cm]{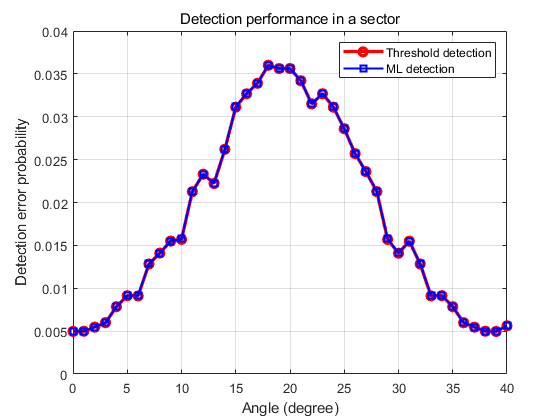}
		\caption{The detection error rate under different $\theta(A)$ within a single PMT FOV.}
		\label{DER_single_sector}
	\end{minipage}
\end{figure}


%

Then, to depict the relationship between detection error rate and $\Delta\theta(A,B)$ in detail, set the angle of $A$ ($\theta(A)$) to be 0$^{\circ}$, 20$^{\circ}$, 40$^{\circ}$ and 60$^{\circ}$, and change $\Delta\theta(A,B)$ from 3$^{\circ}$ to 20$^{\circ}$. Fig.~\ref{DER_diff_change} depicts the detection error rates under different $\Delta\theta(A,B)$. It can be observed that the detection error rate will decrease as angle difference $\Delta\theta(A,B)$ increases.

%

\begin{figure}[htbp]
	\centering
	\begin{minipage}[t]{0.48\textwidth}  
		\centering
		\includegraphics[width=8.5cm]{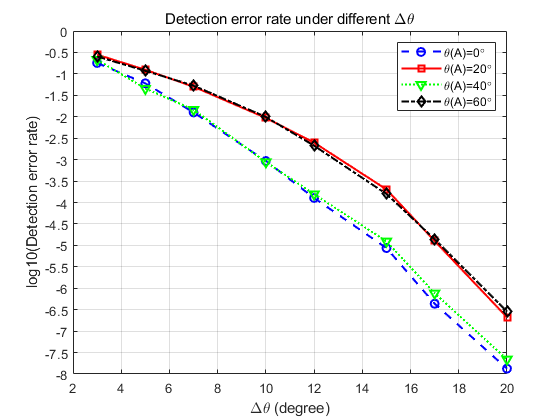}
		\caption{The detection error rate under different $\Delta\theta(A,B)$.}
		\label{DER_diff_change}
	\end{minipage}
	\begin{minipage}[t]{0.48\textwidth}  
		\centering
		\includegraphics[width=8.5cm]{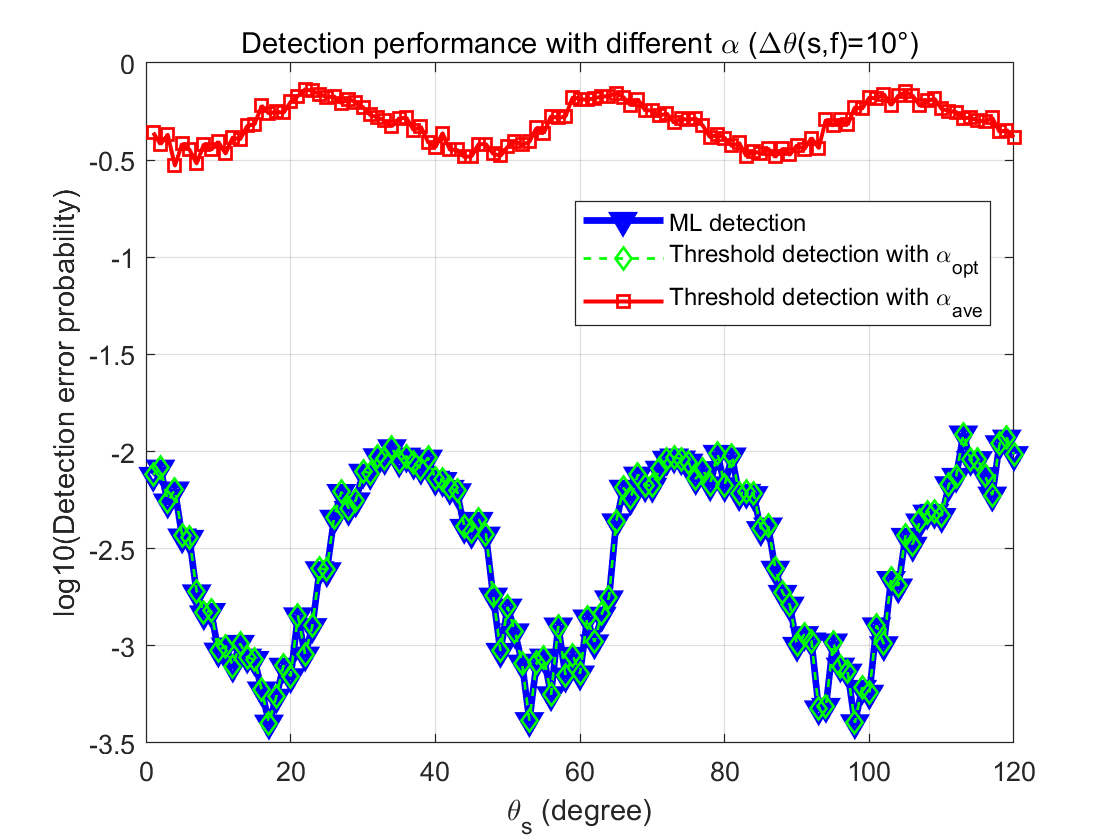}
		\caption{The comparison of two different threshold detections and ML detection.}
		\label{opt_ave_result}
	\end{minipage}
\end{figure}

%



Finally, we evaluate the performance of the two-user detection method proposed in Section \uppercase\expandafter{\romannumeral3}. Fixing $\Delta\theta(A,B)=10^{\circ}$ and changing $\theta(A)$ from 0$^{\circ}$ to $120^{\circ}$ with step $1^{\circ}$, we compare the detection error rate of the proposed detection method with that of maximum-likelihood (ML) detection method. For comparison, we also show the threshold detection with uniform weight ($\alpha_1=...=\alpha_M$). As shown in Fig.~\ref{opt_ave_result}, the performance of proposed threshold detection method is close to that of ML detection, and significantly outperforms that of uniform weight, justifying the asymptotical optimality of the proposed method. 



\subsection{Performance of Channel Estimation Based On Pilot Matrix}
In this Subsection, the accuracy of channel estimation based on pilot matrix designed in Section \ref{sec.chan_esti} is demonstrated. Set $K=4$ as the user number, and $3$ dominating PMTs providing detected photon numbers. As for users' azimuth angles, set $\theta_1=30^{\circ}, \theta_2=40^{\circ}, \theta_3=50^{\circ}$, and $\theta_4=60^{\circ}$. Monte-Carlo simulation shows that the photon arrival intensity between each user and PMT1 are $(1.0491,3.2533,9.6285,20.8329)$. The parameters for PMT2 and PMT3 are $(9.7798,3.1585,37.3374,22.3473)$ and $(37.1711,43.1114,1.0340,1.0000)$, respectively. 

Considering the overhead problem, usually the length of pilot sequence should not be longer than several hundreds. In the first part simulation, the pilot sequence length varies from $100$ to $500$ with step $100$. Moreover, for $K=4$, there are $14$ types of basic patterns of pilot matrix ($\boldsymbol{X}^{(4)}$ is singular, thus excluded) according to Eq. (\ref{Xi_construct}) (The definition of $\boldsymbol{X}^{(R)}$ is shown in Eq. (\ref{x_weight})). To verify the estimation accuracy, the theoretical estimation MSE from Eq. (\ref{f_mse_abc}) and the test MSE (Monte-Carlo simulation repeated 500 times) under each basic pattern are compared. Simulation shows that the basic pattern shown in Eq. (\ref{x_opt}), denoted as $\boldsymbol{X}_{opt}$, achieves the lowest MSE. To verify its optimality, the estimation MSE under two other patterns will be depicted, too. As shown in Fig.~\ref{chan_esti_pmt_1}, Fig.~\ref{chan_esti_pmt_2} and Fig.~\ref{chan_esti_pmt_3}, during the channel estimation of PMT1, PMT2 and PMT3, the estimation MSE under pattern $\boldsymbol{X}_{opt}$ is always lower than those under other two patterns with the same pilot length. Moreover, fixed the basic pattern, the theoretical MSE is quite close to the test MSE and both of them decrease as the length of pilot sequence increases, indicating that the MSE calculation shown in Eq. (\ref{f_mse_abc}) is feasible, and longer pilot sequence benefits the channel estimation. 
\begin{align} \label{x_opt}
\boldsymbol{X}_{opt}=
\left(
\begin{array}{ccccc}
1 & 0 & 0 & 0 & 1 \\
0 & 1 & 0 & 0 & 1 \\
0 & 0 & 1 & 0 & 1 \\
0 & 0 & 0 & 1 & 1  
\end{array}
\right).
\end{align}

\begin{figure}[t]
	\centering
	\subfigure[The MSE of channel parameters detected by PMT1.]{
		\label{chan_esti_pmt_1} 
		\includegraphics[scale=0.25]{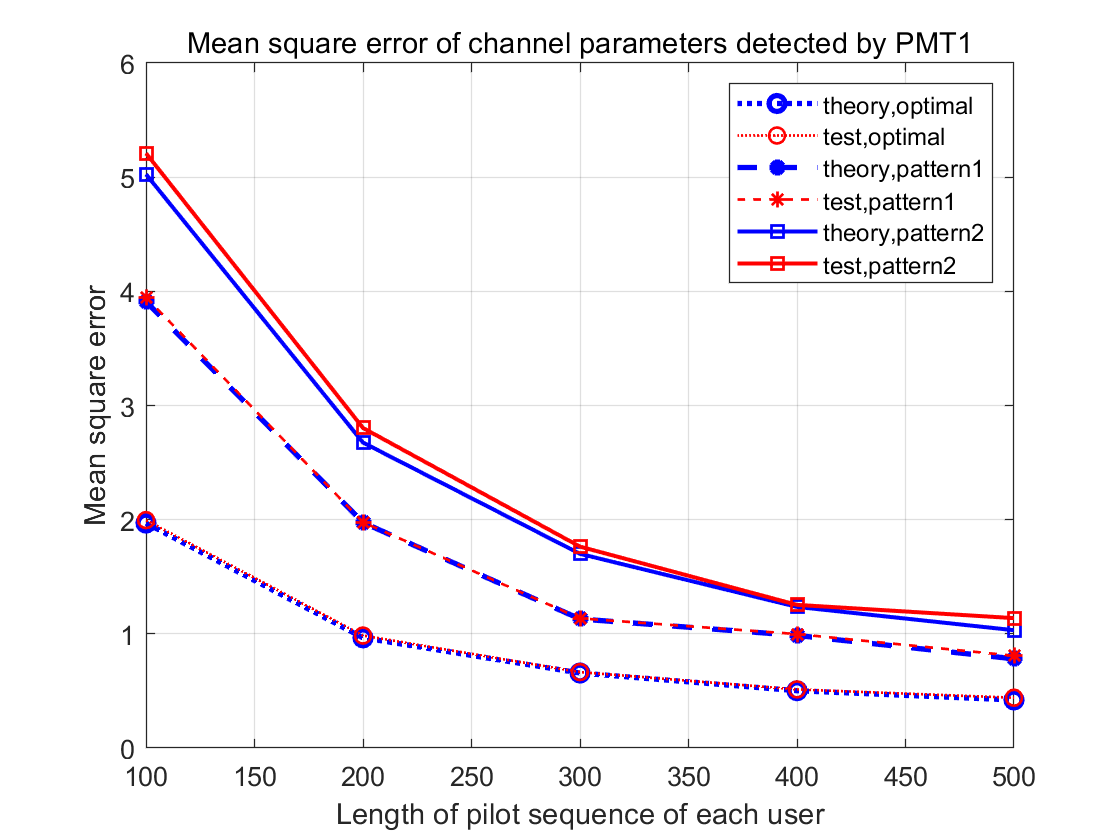}}
	\hspace{0.05in} 
	\subfigure[The MSE of channel parameters detected by PMT2.]{
		\label{chan_esti_pmt_2} 
		\includegraphics[scale=0.25]{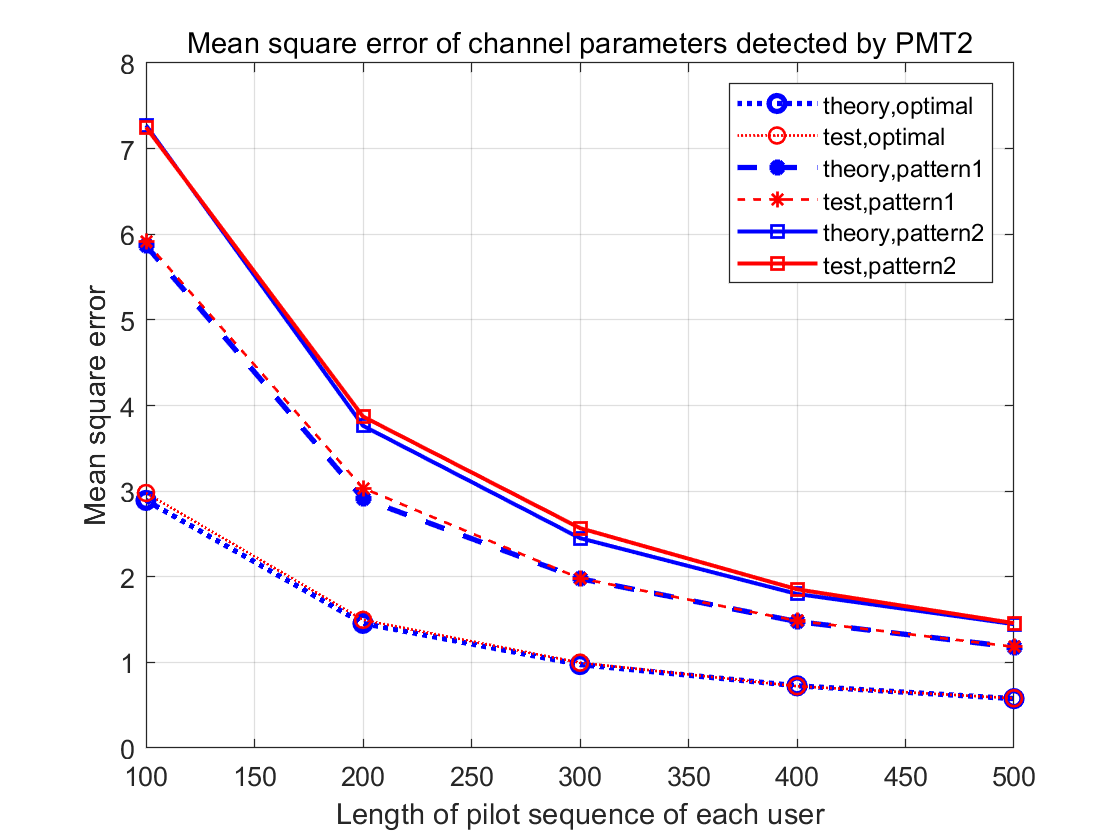}}
	\hspace{0.05in} 
	\subfigure[The MSE of channel parameters detected by PMT3.]{
		\label{chan_esti_pmt_3} 
		\includegraphics[scale=0.25]{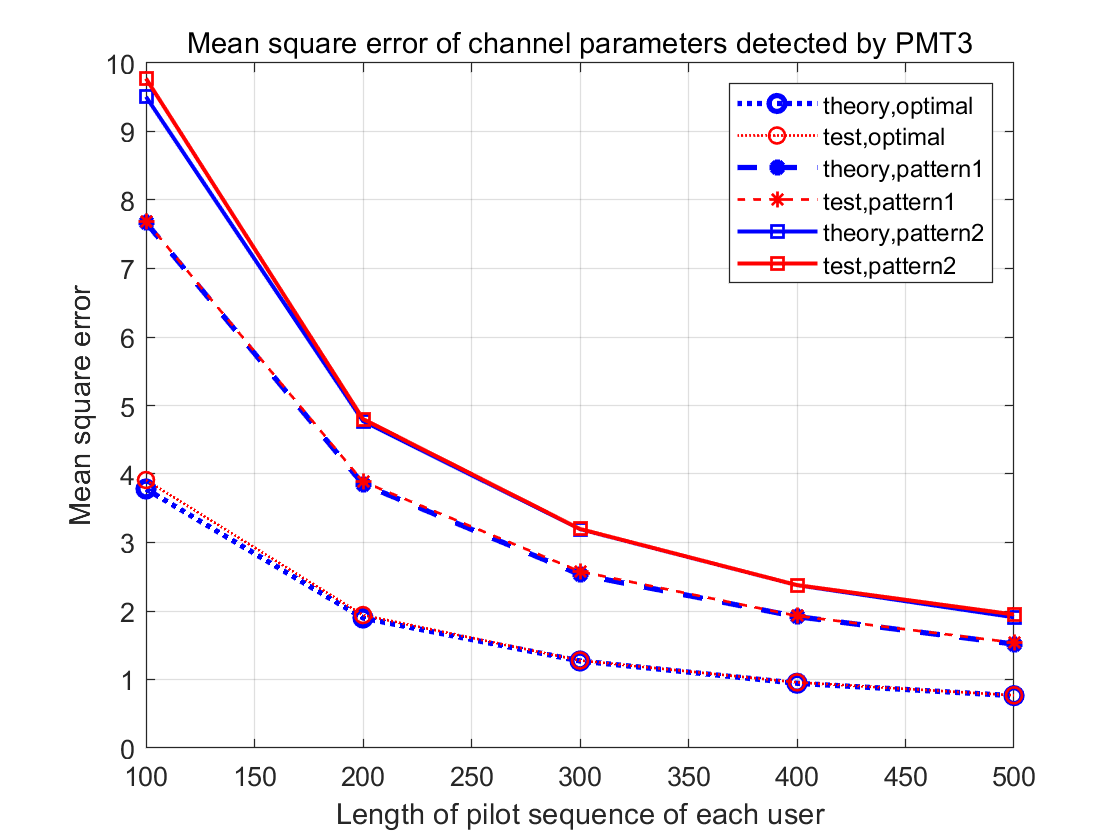}}
	\caption{The MSE of pilot matrix channel estimation.}
	\label{chan_esti_pmt} 
\end{figure}

Moreover, under basic settings aforementioned, we justify that the threshold detection error rate is not sensitive to imperfect channel estimation, which conforms to the conclusions in Subsection \ref{sensi_ana}. 


Firstly, we perform channel estimation under all the $14$ basic patterns of pilot matrix ($K=4, L=100$) according to Eq. (\ref{Xi_construct}), and perform the two-users division detection between user $1$ and user $2$. The basic pattern block is repeated multiple times until exceeding length $100$, and then columns in the last block are removed randomly to reduce the length to $100$ if necessary. We repeat the channel estimation for $500$ times, and the estimated parameters are adopted to signal detection simulation (with a $10^8$-length PRBS OOK test sequence) and the detection error rate is obtained. Compared with the real channel parameters, we calculate the estimation MSE, the mean absolute error (MAE) of estimation, and the mean symbol detection error rate (SER) under each pilot matrix pattern. Denote $\boldsymbol{X}^{i_1,...,i_k}=[\boldsymbol{X}^{(i_1)},...,\boldsymbol{X}^{(i_k)}]$, while $\boldsymbol{X}^{(i)}$ is shown in Eq. (\ref{x_weight}). For instance, $\boldsymbol{X}^{1,2,4}$ can be written as follows,
\begin{align}
\boldsymbol{X}^{1,2,4}=\left(
\begin{array}{ccccccccccc}
1 & 0 & 0 & 0 & 1 & 1 & 1 & 0 & 0 & 0 & 1\\
0 & 1 & 0 & 0 & 1 & 0 & 0 & 1 & 1 & 0 & 1 \\
0 & 0 & 1 & 0 & 0 & 1 & 0 & 1 & 0 & 1 & 1 \\
0 & 0 & 0 & 1 & 0 & 0 & 1 & 0 & 1 & 1 & 1 
\end{array}
\right).
\end{align}

The MSE, MAE and SER results are shown in Table \ref{tab1}. As a comparison, the SER with perfect channel estimation is $0.0522466$. It's seen that with the balanced basic pattern design of pilot matrix, the normalized estimation MSE/MAE is around $10^{-2} \sim 10^{-1}$, leading to only about $10^{-4} \sim 10^{-5}$ increment of the symbol detection error. Thus, under the balanced pilot matrix design, the detection performance is not sensitive to imperfect channel estimation. 

\begin{table}[h]
	\centering
	\caption{Channel estimation and signal detection performance under different balanced basic pattern of pilot matrix (K=4, L=100).}\label{tab1}
	\begin{tabular}{|l|c|c|c|}
		\hline
		\diagbox{Pattern}{Index} & Normalized MSE & Normalized MAE & SER \\\hline
		$\boldsymbol{X}^{1,4}$ & 0.0658 & 0.0539 & 0.0522515  \\\hline
		$\boldsymbol{X}^{2}$ & 0.0726 & 0.0593 & 0.0522547  \\\hline
		$\boldsymbol{X}^{1,2}$ & 0.0738 & 0.0589 & 0.0522615  \\\hline
		$\boldsymbol{X}^{1}$ & 0.0764 & 0.0607 & 0.0522687  \\\hline
		$\boldsymbol{X}^{1,2,4}$ & 0.0817 & 0.0623 & 0.0522721  \\\hline
		$\boldsymbol{X}^{1,2,3}$ & 0.0819 & 0.0636 & 0.0523509  \\\hline
		$\boldsymbol{X}^{2,4}$ & 0.0892 & 0.0660 & 0.0523644  \\\hline
		$\boldsymbol{X}^{1,3}$ & 0.0986 & 0.0691 & 0.0524071  \\\hline
		$\boldsymbol{X}^{1,2,3,4}$ & 0.0988 & 0.0701 & 0.0524266  \\\hline
		$\boldsymbol{X}^{2,3}$ & 0.0999 & 0.0707 & 0.0524789  \\\hline
		$\boldsymbol{X}^{2,3,4}$ & 0.1028 & 0.0717 & 0.0525042  \\\hline
		$\boldsymbol{X}^{1,3,4}$ & 0.1072 & 0.0726 & 0.0525576  \\\hline
		$\boldsymbol{X}^{3}$ & 0.1430 & 0.0838 & 0.0525860  \\\hline
		$\boldsymbol{X}^{3,4}$ & 0.1778 & 0.0923 & 0.0526244  \\\hline
	\end{tabular}
\end{table}

However, a large performance degradation may occur if a random pilot matrix design is chosen. Specifically, when the length of pilot matrix ($L$) is short (usually $L \textless 50$), an ill-conditioned pilot matrix, which can lead to large estimation error and detection error, is more likely to be constructed. To justify this point, under each length $L$, we repeat random pilot matrix design for $1000$ times and obtain the worst detection SER and the corresponding MSE/MAE, and the performance of the optimal balanced pilot matrix ($\boldsymbol{X}^{1,4}$) is also shown. From Table \ref{tab2}, it is seen that shorter $L$ leads to a worse detection performance of random pilot matrix design, but that of the optimal balanced pilot matrix design only has a little change ($10^{-3}$). The performance difference between these two designs will decrease with a larger $L$. In conclusion, when choosing a short pilot matrix aiming to decrease the overhead, we will risk suffering a high estimation error and a high detection error rate if adopting the random pilot matrix design. Thus, it's beneficial to choose the optimal balanced pilot matrix design.      

\begin{table}[h]
	\centering
	\caption{Channel estimation and signal detection performance under the optimal balanced pilot matrix and the random designed pilot matrix (K=4).}\label{tab2}
	\begin{tabular}{|l|c|c|c|c|c|c|}
		\hline
		\diagbox{$L$}{Index} & MSE(random) & MAE(random) & SER(random) &  MSE(opt) & MAE(opt) & SER(opt)\\\hline
		$20$ & 4.0252 & 8.9659 & 0.1195937 & 0.3261 & 0.2447 & 0.0533203  \\\hline
		$30$ & 3.4445 & 5.6216 & 0.0925499 & 0.2203 & 0.1269 & 0.0528763  \\\hline
		$50$ & 2.2872 & 4.9198 & 0.0791195 & 0.1369 & 0.0826 & 0.0525223  \\\hline
		$70$ & 1.7522 & 2.6499 & 0.0622262 & 0.0915 & 0.0682 & 0.0523786 \\\hline
		$100$ & 0.9241 & 1.7515 & 0.0593964 & 0.0682 & 0.0576 & 0.0522471  \\\hline
	\end{tabular}
\end{table}

\subsection{Performance of Multiuser Signal Detection with IUI}
We focus on the multiuser threshold detection performance with unknown IUI, and compare it to that of ML detection. The signal detection with 1,2 and 3 interfering users is simulated. Different angle pairs will be set to observe how the detection performance will change as users' positions change. 


For the scenario with two users, set $r(A)=100m, r(B)=100m$. We aim to test the detection error rate with respect to different angle difference $\Delta\theta(A,B)$. Recall that we adopt $9$ sectors within the whole angle field. Fix $\theta(A)=0^{\circ}$ and $\theta(A)=20^{\circ}$, and change $\Delta\theta(A,B)$ from $2^{\circ}$ to $30^{\circ}$. Fig.~\ref{s0_interf_1} and Fig.~\ref{s20_interf_1} depict the detection error rate under different $\Delta\theta(A,B)$ for $\theta(A)=0^{\circ}$ and $\theta(A)=20^{\circ}$, respectively. It's apparent that a larger angle difference between user $A$ and user $B$ leads to a lower detection error rate. By comparing the two figures, we find that the desired user ($A$) located at the edge of a PMT FOV has a better detection performance than that at the center of a PMT FOV, which conforms to the previous conclusion. Moreover, it's seen that three lines, representing $P_e^{ML}$, $P_e^{test}$ and $P_e^{th}$, are quite close to each other while maintaining $P_e^{ML} \textless P_e^{test} \textless P_e^{th}$. It indicates that $P_e^{th}$ is indeed an excellent closed-form approximation of $P_e^{ML}$. In addition, the successive elimination method is proved to be feasible since $P_e^{test}$ is also close to $P_e^{ML}$.

\begin{figure}[H]
	\centering
	\subfigure[The detection error rate with one interfering user when $\theta(A)=0^{\circ}$.]{
		\label{s0_interf_1} 
		\includegraphics[scale=0.25]{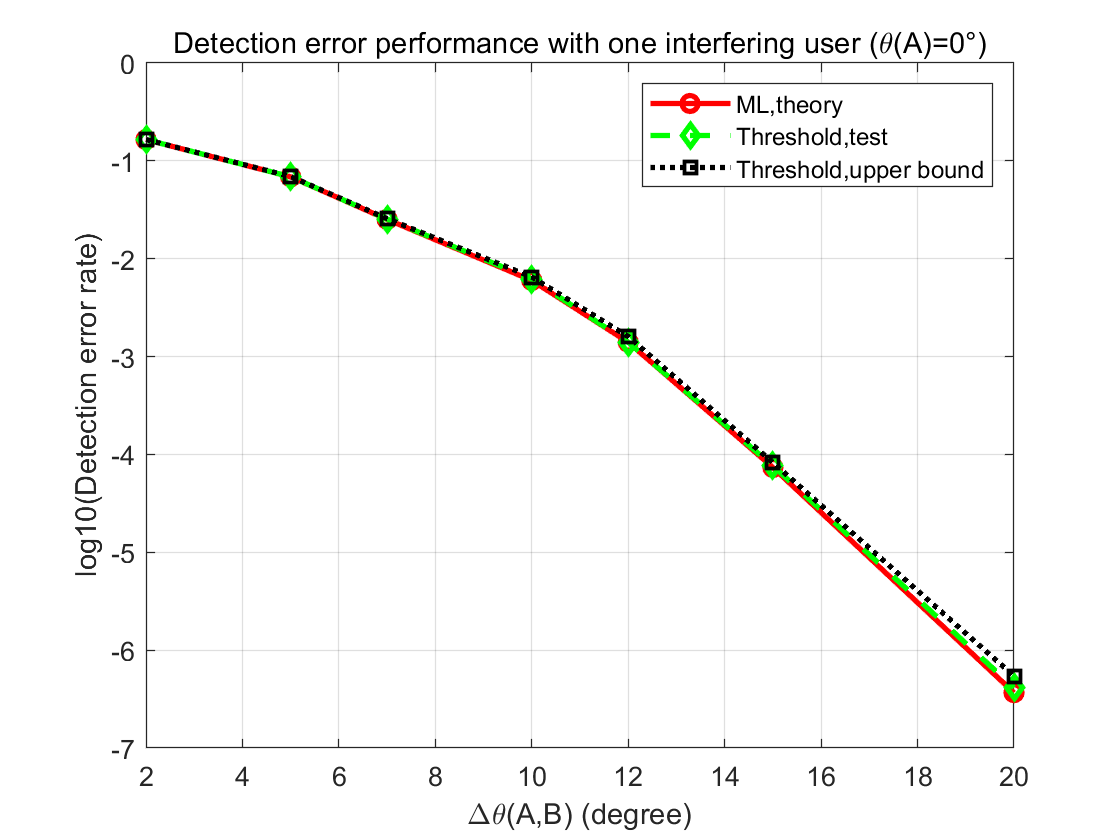}}
	\hspace{0.05in} 
	\subfigure[The detection error rate with one interfering user when $\theta(A)=20^{\circ}$.]{
		\label{s20_interf_1} 
		\includegraphics[scale=0.25]{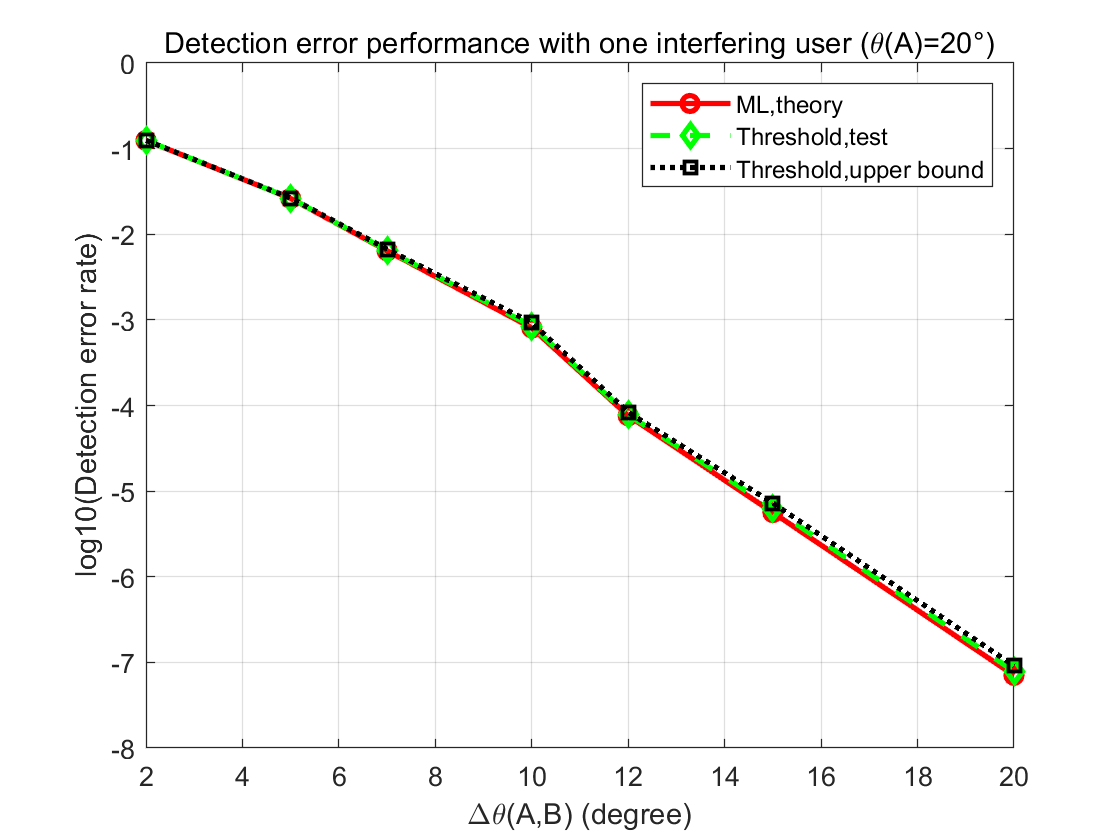}}
	\caption{The detection error rate with one interfering user.}
	\label{interf_1} 
\end{figure}

Now we turn to three-users system (desired user $A$, interfering users $B_1$ and $B_2$) and four-users system (desired user $A$, interfering users $B_1$, $B_2$ and $B_3$). Set the distance between each user and the receiver to $100m$. We aim to test the detection error rate with respect to different users position. For the three-user system, set $\Delta\theta(A,B_1)=\Delta\theta(B_1,B_2)=\Delta_2$ and change the value of $\Delta_2$; while for the three-user system, set $\Delta\theta(A,B_1)=\Delta\theta(B_1,B_2)=\Delta\theta(B_2,B_3)=\Delta_3$ and change the value of $\Delta_3$. Fig.~\ref{multiuser_detect} shows the detection error rate with 1,2 and 3 interfering users. Obviously, a larger amount of interfering users increases the detection error rate. It is shown that $P_e^{th}$ is still an excellent closed-form approximation of $P_e^{ML}$ and the successive elimination method is still feasible.   

\begin{figure}[htbp]
	\centering
	\includegraphics[width=4.5in]{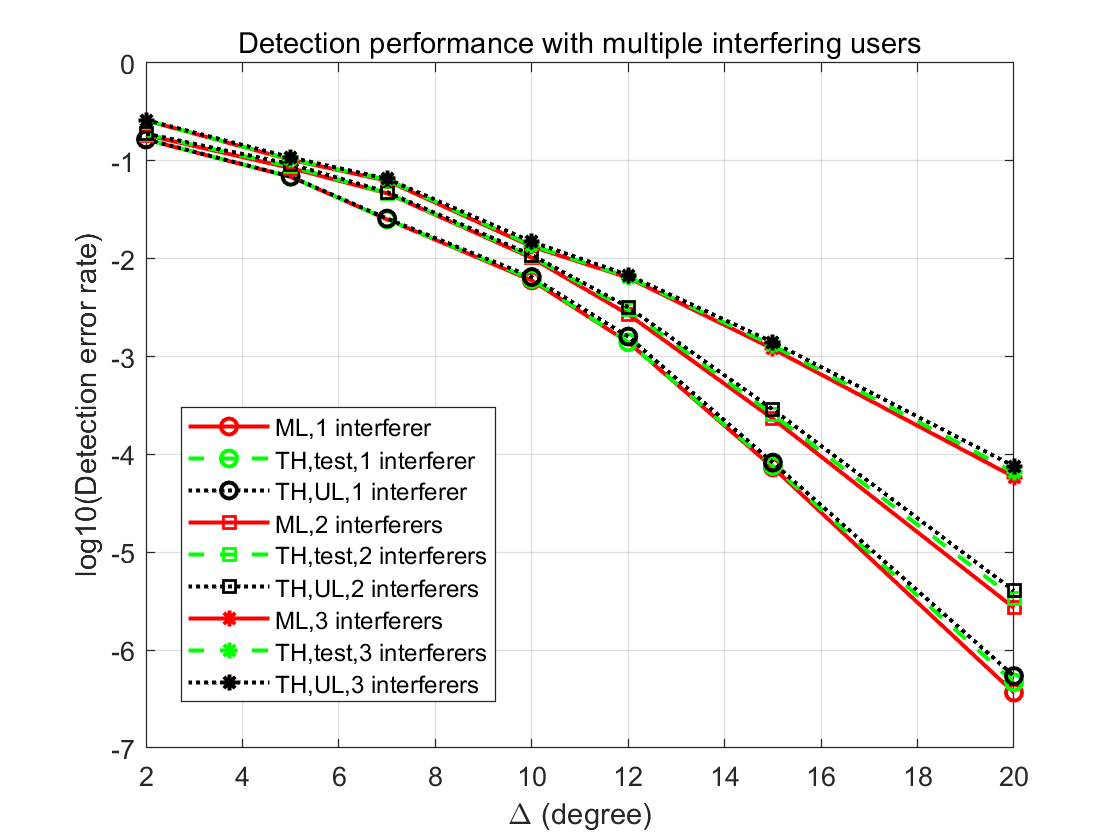}
	\caption{The detection error rate with multiple interfering users.} \label{multiuser_detect}
\end{figure}

Finally, we consider three-user system, and compare the whole execution time of threshold detection ($t_{TH}$) with that of ML detection ($t_{ML}$) via simulations. Fig.~\ref{ml_th_prop} depicts ratio $\frac{t_{ML}}{t_{TH}}$ under different channel conditions, showing that the proposed successive elimination method is nearly 20 times faster than ML detection. 

\begin{figure}[htbp]
	\centering
	\includegraphics[width=4in]{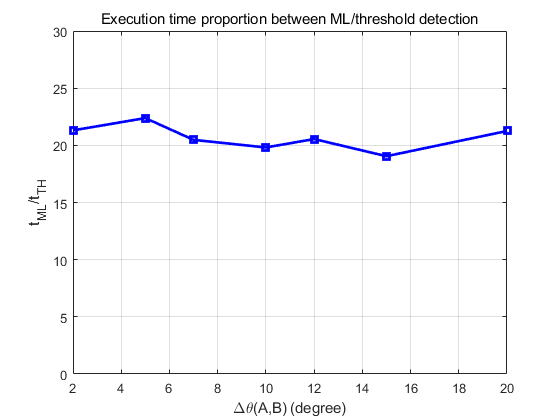}
	\caption{The execution time ratio of ML to the proposed successive elimination detection.} \label{ml_th_prop}
\end{figure}

\section{conclusion}
We have addressed the channel estimation and signal detection via exploiting the space division potential of ultraviolet scattering communication. We have optimized the pilot matrix to reduce the MSE of the LS channel estimation. In order to characterize the space division in MAC system, we have adopted Gaussian approximation on the Poisson weighted sum, and analyzed the separability of the two users by the optimal threshold detection rule. We have also addressed the signal detection in the case of multiple users based on the previous threshold detection rule, and proposed a successive elimination method with a tractable closed-form expression of the upper bound on the detection error probability. The computational complexity of the proposed approach is around twenty times lower than that of the ML detection.

\begin{appendices}
	\section{ Proof of Theorem \ref{f_mse_deep_calc} } \label{appendix_A}
	Recall that the unbiased estimation mean-square-error of m-th PMT with respect to pilot matrix $\boldsymbol{X}_{K \times L}$ is shown as follows,
	\begin{align} \label{unbiased_H_MSE_app}
	 MSE_m(\boldsymbol{X})=Tr((\boldsymbol{X}\boldsymbol{X}^T)^{-1}\boldsymbol{X}diag(\boldsymbol{X}^T\boldsymbol{\lambda_{sr}}(m))\boldsymbol{X}^T(\boldsymbol{X}\boldsymbol{X}^T)^{-1})+\boldsymbol{\lambda_n}(m)Tr((\boldsymbol{X}\boldsymbol{X}^T)^{-1}),
	\end{align}
	where $\boldsymbol{\lambda_n}(m)$ is the Poisson noise intensity of the m-th PMT, $K$ is total users number, and $\boldsymbol{\lambda_{sr}}(m)$ is the estimated channel parameters vector. 
	Define $K \times K$ identity matrix as $\boldsymbol{I}$ and $K \times K$ full-one matrix as $\boldsymbol{E}$. According to Eq. (\ref{abc_def}), we have
	\begin{align} \label{xxt_expression}
	\boldsymbol{X}\boldsymbol{X}^T=L[(a-b)\boldsymbol{I}+b\boldsymbol{E}].
	\end{align}
	 Recalling that $\boldsymbol{X}=(\boldsymbol{x}_1,\boldsymbol{x}_2,...,\boldsymbol{x}_L)$, and noting that each element of matrix $\boldsymbol{X}$ is either 0 or 1, we have
	\begin{align} \label{norm_1_property}
	\sum_{l=1}^{L} \Vert \boldsymbol{x}_l \Vert_1^2&=\sum_{l=1}^{L} \bigg(\sum_{k=1}^{K} x_{l,k}+2\sum_{1 \leq k_1 \textless k_2 \leq K} x_{l,k_1}x_{l,k_2}\bigg)
	=aKL+bK(K-1)L. \notag \\
	\sum_{l=1}^{L} \Vert \boldsymbol{x}_l \Vert_1^3&=\sum_{l=1}^{L} \bigg(\sum_{k=1}^{K} x_{l,k}+6\sum_{1 \leq k_1 \textless k_2 \leq K} x_{l,k_1}x_{l,k_2}+6\sum_{1 \leq k_1 \textless k_2 \textless k_3 \leq K} x_{l,k_1}x_{l,k_2}x_{l,k_3} \bigg) \notag \\
	&=aKL+3bK(K-1)L+cK(K-1)(K-2)L.
	\end{align}
	Using Matrix Inverse Lemma, we have
	\begin{align} \label{xxt_inv_expression}
	(\boldsymbol{X}\boldsymbol{X}^T)^{-1}=\frac{1}{L(a-b)}[\boldsymbol{I}-\frac{b}{a+(K-1)b}\boldsymbol{E}].
	\end{align}
	Based on the properties of matrix trace, we have
	\begin{align} \label{trace_order_change}
	Tr((\boldsymbol{X}\boldsymbol{X}^T)^{-1}\boldsymbol{X}diag(\boldsymbol{X}^T\boldsymbol{\lambda_{sr}}(m))\boldsymbol{X}^T(\boldsymbol{X}\boldsymbol{X}^T)^{-1})
	=Tr(\boldsymbol{X}^T(\boldsymbol{X}\boldsymbol{X}^T)^{-1}(\boldsymbol{X}\boldsymbol{X}^T)^{-1}\boldsymbol{X}diag(\boldsymbol{X}^T\boldsymbol{\lambda_{sr}}(m))).
	\end{align}
	Furthermore, we have
	\begin{align} \label{Q_def}
	\boldsymbol{Q^{'}}=(\boldsymbol{X}\boldsymbol{X}^T)^{-1}(\boldsymbol{X}\boldsymbol{X}^T)^{-1}
	=\frac{1}{L^2(a-b)^2}[\boldsymbol{I}-\frac{2ab-(2-K)b^2}{(a+(K-1)b)^2}\boldsymbol{E}]=\frac{\boldsymbol{Q}}{L^2(a-b)^2}. 
	\end{align}
	Then, we have
	\begin{align} \label{mse_tractable_expression}
	MSE_m(\boldsymbol{X})=Tr((\boldsymbol{X}^T\boldsymbol{Q^{'}}\boldsymbol{X})diag(\boldsymbol{X}^T\boldsymbol{\lambda_{sr}}(m)))+\boldsymbol{\lambda_n}(m) \ Tr((\boldsymbol{X}\boldsymbol{X}^T)^{-1}).
	\end{align}
	Moreover, we have
	\begin{align} \label{trace_first_part}
	Tr((\boldsymbol{X}\boldsymbol{X}^T)^{-1})=\frac{K}{L(a-b)} \ (1-\frac{b}{a+(K-1)b}).
	\end{align} 
	Considering term $Tr((\boldsymbol{X}^T\boldsymbol{Q^{'}}\boldsymbol{X})diag(\boldsymbol{X}^T\boldsymbol{\lambda_{sr}}(m)))$, recalling
	\begin{align} \label{x_q_h_clear_express}
	\boldsymbol{X}_{K \times L}=(\boldsymbol{x}_1,\boldsymbol{x}_2,...,\boldsymbol{x}_L), \ 
	\boldsymbol{\lambda_{sr}}(m)=(\boldsymbol{\lambda_s}(1,m),\boldsymbol{\lambda_s}(2,m),...,\boldsymbol{\lambda_s}(K,m))^T,
	\end{align} 
	we have
	\begin{align} \label{hard_part_calc_1}
	Tr((\boldsymbol{X}^T\boldsymbol{Q^{'}}\boldsymbol{X})diag(\boldsymbol{X}^T\boldsymbol{\lambda_{sr}}(m)))=\frac{1}{L^2(a-b)^2} \ \bigg(\sum_{l=1}^{L} \ \boldsymbol{x}_l^T\boldsymbol{Q}\boldsymbol{x}_l\boldsymbol{x}_l^T\bigg)\boldsymbol{\lambda_{sr}}(m).
	\end{align}
	According to Eq. (\ref{Q_def}), to simplify the calculation, denoting $\boldsymbol{Q}=\boldsymbol{I}+\psi \boldsymbol{E}$, we have that,
	\begin{align} \label{tr_2_calc_1}
	\sum_{l=1}^{L} \ \boldsymbol{x}_l^T\boldsymbol{Q}\boldsymbol{x}_l\boldsymbol{x}_l^T=&\sum_{l=1}^{L} \ \boldsymbol{x}_l^T (\boldsymbol{I}+\psi \boldsymbol{E})\boldsymbol{x}_l\boldsymbol{x}_l^T 
	=\sum_{l=1}^{L} \boldsymbol{x}_l^T \boldsymbol{I} \boldsymbol{x}_l \boldsymbol{x}_l^T + \psi \sum_{l=1}^{L} \boldsymbol{x}_l^T \boldsymbol{E} \boldsymbol{x}_l \boldsymbol{x}_l^T \notag \\
	=&\sum_{l=1}^{L} (\sum_{k=1}^{K} x^2_{k,l}) \boldsymbol{x}_l^T + \psi \sum_{l=1}^{L} (\sum_{k=1}^{K} x_{k,l})^2 \boldsymbol{x}_l^T.
	\end{align}
	Define
	\begin{align} \label{q_def}
	q(k_0)=\sum_{l=1}^{L} (\sum_{k=1}^{K} x^2_{k,l}) x_{k_0,l} + \psi \sum_{l=1}^{L} (\sum_{k=1}^{K} x_{k,l})^2 x_{k_0,l}, \ k_0=1,2,...,K.
	\end{align} 
	Due to user balanced structure for $\boldsymbol{X}$, we have 
	\begin{align}
	q(1)=q(2)=...=q(K) \triangleq q.
	\end{align} 
	To simplify the calculation of $q$, consider the summation of $\{q(j)\}$,
	\begin{align}
	Kq&=\sum_{k_0=1}^{K} q(k_0) =\sum_{l=1}^{L} (\sum_{k=1}^{K} x^2_{k,l}) (\sum_{k_0=1}^{K} x_{k_0,l}) + \psi \sum_{l=1}^{L} (\sum_{k=1}^{K} x_{k,l})^2 (\sum_{k_0=1}^{K} x_{k_0,l}) \notag \\
	&=\sum_{l=1}^{L} (\sum_{k=1}^{K} x_{k,l}) (\sum_{k_0=1}^{K} x_{k_0,l}) + \psi \sum_{l=1}^{L} (\sum_{k=1}^{K} x_{k,l})^3
	=\sum_{l=1}^{L} (\sum_{k=1}^{K} x_{k,l})^2 + \psi \sum_{l=1}^{L} (\sum_{k=1}^{K} x_{k,l})^3 \notag \\
	&=\sum_{l=1}^{L} \Vert \boldsymbol{x}_l \Vert_1^2 + \psi \sum_{l=1}^{L} \Vert \boldsymbol{x}_l \Vert_1^3.
	\end{align}
	Thus, we have
	\begin{align} \label{q_calc}
	q&=\frac{1}{K} (\sum_{l=1}^{L} \Vert \boldsymbol{x}_l \Vert_1^2 + \psi \sum_{l=1}^{L} \Vert \boldsymbol{x}_l \Vert_1^3) \notag \\
	&=aL+b(K-1)L+\psi(aL+3b(K-1)L+c(K-1)(K-2)L).
	\end{align}
	Denote $\boldsymbol{1}$ as a $K \times 1$ full-one vector, note that,
	\begin{align} \label{tr_2_calc_2}
	\bigg(\sum_{l=1}^{L} \ \boldsymbol{x}_l^T\boldsymbol{Q}\boldsymbol{x}_l\boldsymbol{x}_l^T\bigg)\boldsymbol{\lambda_{sr}}(m)=q \cdot \boldsymbol{1} \cdot \boldsymbol{\lambda_{sr}}(m)=q \Vert \boldsymbol{\lambda_{sr}}(m) \Vert_1.
	\end{align}
	Based on Eq. (\ref{mse_tractable_expression}), Eq. (\ref{trace_first_part}), Eq. (\ref{hard_part_calc_1}), Eq. (\ref{q_calc}) and Eq. (\ref{tr_2_calc_2}), noting that $\psi=-\frac{2ab-(2-K)b^2}{(a+(K-1)b)^2}$, we have
	\begin{align} \label{mse_x_final}
	&\mathcal{F}_{m}(a,b,c)=\frac{\boldsymbol{\lambda_n}(m) K(1-\frac{b}{a+(K-1)b})}{L(a-b)} \notag \\
	&+\frac{\Vert \boldsymbol{\lambda_{sr}}(m) \Vert_1}{L(a-b)^2}\bigg( a+(K-1)b+\frac{(2-K)b^2-2ab}{(a+(K-1)b)^2} (a+3b(K-1)+c(K-1)(K-2))\bigg).
	\end{align}
	
\end{appendices}

\begin{footnotesize}
	\bibliography{chan_esti_sig_detect_SDMA}
	\bibliographystyle{IEEEtran}
\end{footnotesize}

\end{document}